\def\equalcontrib{%
  \ifnum\value{eqfn}=0%
    \footnote{These authors contributed equally.}%
  \else%
    \footnotemark[\value{eqfn}]%
  \fi%
}%
\title{On the Edge of Core (Non-)Emptiness: An Automated Reasoning Approach to Approval-Based Multi-Winner Voting}
\author[1,2]{Ratip Emin Berker\footnote{These authors contributed equally.}}
\author[1,2]{Emanuel Tewolde$^*$}
\author[1,2,3]{\\ Vincent Conitzer}
\author[4]{Mingyu Guo}
\author[1]{Marijn Heule}
\author[5]{Lirong Xia}
\affil[1]{Carnegie Mellon University}
\affil[2]{Foundations of Cooperative AI Lab (FOCAL)}
\affil[3]{University of Oxford}
\affil[4]{University of Adelaide}
\affil[5]{Rutgers University}
\affil[ ]{\texttt{\{rberker, etewolde, conitzer, mheule\}@cs.cmu.edu, mingyu.guo@adelaide.edu.au, lirong.xia@rutgers.edu}}
\begin{document}

\maketitle

\begin{abstract}
    Core stability is a natural and well-studied notion for group fairness in multi-winner voting, where the task is to select a committee from a pool of candidates. We study the setting where voters either approve or disapprove of each candidate; here, it remains a major open problem whether a core-stable committee always exists. In this work, we develop an approach based on mixed-integer linear programming for deciding whether and when core-stable committees are guaranteed to exist. In contrast to SAT-based approaches popular in computational social choice, our method can produce proofs for a specific number of candidates \emph{independent} of the number of voters. In addition to these computational gains, our program lends itself to a novel duality-based reformulation of the core stability problem, from which we obtain new existence results in special cases. Further, we use our framework to reveal previously unknown relationships between core stability and other desirable properties, such as notions of priceability.
\end{abstract}

\section{Introduction}

Elected committees enable efficient implementation of high-stakes decisions affecting large groups of voters, forming the backbone of representative democracies. 
The same formalization of electing a ``committee'' also applies to selecting sets of items of other types---a set of locations at which to build facilities, times at which to schedule webinars, \emph{etc}. The benefits of committee elections, however, crucially rely on the committee itself being (proportionally) representative, reflecting the diverse preferences of the voters. One possible way of expressing such preferences is that of \emph{approval sets}, where each voter specifies which candidates they approve of. 
Approval preferences are simple to elicit; they also come with a range of mathematical properties that enable us to rigorously argue about the quality of an elected committee. Accordingly, \citet{Aziz17:Justified}
defined a hierarchy of \emph{representation axioms}---desiderata a committee should ideally satisfy---for approval-based multi-winner elections. Their work paved the way for a rich literature analyzing further axioms in this setting, as well as the relationships among them; \cf \citet{Lackner22:Multi} for an excellent overview. Importantly, several of these axioms guided the design of novel and efficient (multi-winner) voting rules \citep{Aziz18:Complexity,Peters20:Proportionality,Brill23:Robust,Casey25:Justified}, some of which are currently utilized in local elections in several countries \citep{Peters25:Core}.
 
Out of the hierarchy of axioms introduced by \citeauthor{Aziz17:Justified}, the one named \emph{core stability} stands out, both due to its interpretability and strength. It stands on top of this hierarchy and implies many axioms introduced later on (see, \eg, \citealp{Peters21:Proportional}). Intuitively, a committee $W$ consisting of $k$ candidates is core-stable if no coalition of voters all strictly prefer an alternative committee that they can jointly ``afford'' with their proportional share of the seats. More formally, we say that a subset of candidates $W' \subseteq C$ is a successful \emph{deviation} from $W$ if there is a subset of voters $N' \subseteq N$ such that $\frac{|W'|}{k} \leq \frac{|N'|}{|N|}$ (\ie, $N'$ can afford $W'$) and all members of $N'$ approve of strictly more members of $W'$ than $W$. Then, $W$ is core-stable if it does not admit any successful deviations. Equivalently, for any potential deviation $W'$, the support it receives from the voters (who strictly prefer it to $W$) must be insufficient to afford $W'$.

\begin{ex}[This paper's running example]
\label{ex:simple_profile}
    Consider an instance with $|N|=6$ voters and $|\alt|=5$ candidates. The approval sets of the voters are
    \begin{center}
    \begin{tabular}{ccc}
        $\vote_1 = \{c_1,c_2,c_3\}$ & $\vote_2=\{c_2,c_4\}$ & $\vote_3=\{c_2,c_4\}$ \\
       $\vote_4 = \{c_2,c_5\}$ &$\vote_5=\{c_2,c_5\}$ & $\vote_6= \{c_4,c_5\}$    
    \end{tabular}.\end{center}
    If our goal is to pick a committee with $k=3$ candidates, then $W=\{c_1,c_3,c_5\}$ is \emph{not} core-stable: $W'=\{c_2,c_5\}$ is preferred to $W$ by each voter in $N'=\{2,3,4,5\}$, achieving $\frac{|W_1'|}{k} = \frac{2}{3} \leq \frac{4}{6} = \frac{|N'|}{|N|}$. The committee $W^*= \{c_2,c_4,c_5\}$ is core-stable, as there is no successful deviation from it.
\end{ex}

Core stability also captures the idea of ``fair taxation'' \citep{Munagala21:Approximate}: if the number of seats is seen as the total resources of the community (where each agent brings in the same amount), no subgroup of voters are better off leaving the group with their share of the resources.
Importantly, the appeal of this property extends beyond political elections; it can also be a powerful guiding principle for achieving group fairness in AI/ML. For example, \citet{Chaudhury22:Fairness,Chaudhury24:Fair} apply core stability (and an adaptation thereof) to define fair outcomes in federated learning, where a set of agents (\emph{clients}) use their heterogeneous and decentralized data to train a single model. Indeed, if a subset of clients can use their own resources to train a different model that makes all of them happier, they will stop contributing their data and compute to the larger coalition. Approval-based multi-winner elections can also capture extensions of this setting, such as federated multi-objective optimization (introduced by \citealp{Yang24:Federated}), where each voter is interested in only a subset of the objectives. In the AI alignment literature, \citet{Conitzer24:Social} emphasize that core-like deviations can guide decisions to create multiple AI systems serving different subgroups, rather than a single overall system.

Despite its wide applications and strong guarantees, core stability remains highly elusive: all known multi-winner voting rules fail it, and it remains a major open problem whether a core-stable committee always exists in approval-based elections \citep{Aziz17:Justified,Cheng19:Group,Munagala22:Auditing}. While approximations \citep{Jiang20:Approximately,Peters20:Proportionality} and relaxations \citep{Cheng19:Group,Brill20:Approval} 
have been studied, until very recently the only known (unrestricted) existence result was for committees of size up to $k=3$ \citep{Cheng19:Group}. In concurrent and soon-to-appear work, \citet{Peters25:Core} pushes this boundary up to $k=8$ by showing that a specific voting rule, Proportional Approval Voting (PAV), gives core-stable committees up to this point. However, he also gives a counterexample for PAV for $k=9$, showing that this specific method cannot be pushed further. In this paper, we take an approach \textbf{free of assumptions about the voting rule used}; that is, we search over all possible voter preferences to find those where core stability comes closest to being failed by all committees. At the heart of many of our results lie techniques borrowed from \emph{automated reasoning}.

\subsection{Automated Reasoning in Social Choice}

The automated reasoning paradigm involves the development of theoretical results and interpretable proofs through the intuition gained from computer-generated proofs for  instances to which the techniques scale. \citet{Tang08:A,Tang09:Computer}  demonstrated its effectiveness in social choice by rederiving the celebrated impossibility theorems by \citet{Arrow63:Social} and Gibbard–Satterthwaite \citep{Gibbard73:Manipulation,Satterthwaite75:Strategy}. The central and recurring idea since then has been to express the input parameters to social choice problems (votes, candidate selection, \emph{etc}.) via Boolean variables, and to encode axioms via Boolean formulas over these variables. In the following years, this paradigm---and specifically SAT solving \citep{Biere21:Handbook}---has lead to advancements in a plethora of social choice settings \citep{Geist17:Computer}. Examples include problems such as ranking sets of objects \citep{Geist11:Automated}, irresolute voting rules and tournament solutions \citep{Brandt16:Finding}, fair division \citep{Brandl21:Distribution}, the no-show paradox \citep{Muolin88:Condorcet,Brandt17:Optimal,Brandl19:Strategic}, and matching markets \citep{Endriss20:Analysis}. Closer to our setting of multi-winner elections, \citet{Peters18:Proportionality} showed that forms of proportionality and strategyproofness can be incompatible with regards to some resolute voting rules.

A drawback to the SAT approach is that solvers do not scale well with the number of voters, which can get large. This is a bottleneck in our setting especially because multi-winner voting instances are parametrized by three values (number of voters, candidates, and committee seats), which further complicates the key step of extending computer proofs for small instances to the general setting. In this work, we analyze core stability for vote \emph{distributions}, which enables us to leverage ``linearity'' properties of the core \citep{Xia24:Linear} to eliminate dependencies on the number of voters. To do so, we abandon SAT methods in favor of \emph{mixed-integer linear programming}. Related approaches have previously found fruitful applications in probabilistic social choice \citep{Mennle16:Pareto,Brandl18:Proving}. They have also long
been fundamental to the adjacent field of \emph{automated mechanism design}, where (mixed integer) linear programs can be used to solve the discretized version of the general problem~\cite{Conitzer02:Mechanism,Conitzer04:Self} and have helped in proving new (im)possibility results, for example in the context of redistribution mechanisms \citep{Guo07:Worst,Guo10:Computationally,Damle24:Designing}. 
Last but not least, \citet{Peters25:Core} analyzes the PAV rule in our problem setting by describing the PAV solutions as a linear program. Overall, however, we find that automated reasoning approaches beyond SAT---especially those taking the perspective of linear theories---have been relatively underexplored in the literature. Indeed, we suspect that a variety of other social choice settings can benefit from them.

\subsection{Our Contributions}

Leveraging the formulation of core stability in terms of vote distributions, we introduce and study the problem of finding the vote distributions in which committees are ``least'' core-stable. In \Cref{sec:milp}, we cast this problem first as a nested optimization problem, and then as a mixed-integer linear program, where the optimal value corresponds to the (positive or negative) excess support the best deviation of each committee is guaranteed to have. By running experiments with our implementation of the latter program in \Cref{sec:experiments}, we identify a pattern in the optimal values of instances with few candidates, leading us to a surprising connection to the stronger axiom of Droop core stability. We show that the identified pattern indeed forms a lower bound for all instances. In \Cref{sec:dual}, we then use linear program duality to prove matching upper bounds in specific cases with small deviations or large committees, proving novel non-emptiness results and rediscovering previously known results as corollaries. In \Cref{sec:lindahl}, we use a modification of our program to identify the minimal instances for which (Droop) core stability does not imply priceability axioms from the literature. Our findings resolve previously open problems, including disproving a conjecture on Lindahl priceability. The code for our experiments is available as an open-source GitHub repository,\footnote{https://github.com/emanueltewolde/Core-MILP} and omitted proofs can be found in the appendix.

\section{Preliminaries}
\label{sec:prelims}

In \emph{approval-based multi-winner elections}, we have a set of \emph{candidates} $\alt = \{c_1, c_2, \ldots, c_m\}$, from which we have to select a committee $W \subset \alt$ of size $k \in \N$, also called a $k$-committee. We assume $0<k<m$. Denote the set of all $k$-committees and non-empty committees of size up to $k$ by 
\begin{center}
    $\calM_k \coloneq \{ W \subset \alt : |W| = k\}$, and \\ 
    $\calM_{\leq k} \coloneq \{ W' \subset \alt : 1\leq |W'| \leq k\}$
\end{center}
respectively. We refer to each $W' \in \calM_{\leq k}$ as a (potential) \emph{deviation}. Let $N \coloneq\{1,2,\ldots,n\}$ be the set of \emph{voters}, where $n \in \N$. Each voter $i \in N$ has a subset $\vote_i \subseteq \alt$ of candidates they approve of, also called an \emph{approval set}. Together, these sets form an (approval) \emph{profile} $\profile \coloneq ( \vote_i )_{i \in N}$.

We call a $k$-committee $W$ \emph{core-stable} if, intuitively, no deviation $W'$ is strictly preferred to $W$ by a subset of voters whose proportional share of candidates exceeds $|W'|$.

\begin{defn}[Core Stability]
    A $k$-committee $W \in \calM_k$ is \emph{core-stable} \wrt profile $\profile$ if for all $W' \in \calM_{\leq k}$, we have
    \begin{align}
    \label{eq:core stability}      
    |\{i \in N: |\vote_i \cap W'| > |\vote_i \cap W| \}| \cdot \frac{k}{n} < |W'|.
    \end{align}
\end{defn}

Here, $\frac{k}{n}$ is the fraction of a committee seat that a single voter can ``buy'', or dually, $\frac{n}{k}$ is the ``cost'' (expressed in number of voters) of securing a seat. The \emph{core} of a profile is defined as the set of all of its core-stable committees.

\section{Core as a Mixed-Integer Linear Program}
\label{sec:milp}

In this section, we develop a mixed-integer linear program for computing profiles for which core stability is ``least satisfiable,'' \ie, the core, if not empty, is closest to being empty. Recently, \citet{Xia24:Linear} has noted the following ``linearity'' property: The exact number of voters $n$ is not crucial to the core of a  profile $\profile$, but rather the \emph{frequency} with which each vote $\vote \subseteq \alt$ appears in $\profile$. In particular, if we multiply each voter by a constant number, then the core remains unchanged. This can be seen from \Cref{eq:core stability}; note that $k$ stays fixed. Hence, for any profile $\profile = ( \vote_i )_{i \in N}$, we can study its associated \emph{vote distribution} $\bfx \in \Delta(2^{\alt}) \cap \Q^{2^C}$, defined as $\bfx[\vote] \coloneq \frac{1}{n} \cdot |\{i \in N : \vote_i = \vote \}|$. Here, $\Delta(S) \subseteq \R^{S}$ denotes the probability simplex over a discrete set $S$. For the profile from \Cref{ex:simple_profile}, the associated vote distribution is
\begin{align}\label{eq:ex-profile}
    \bfx[\vote]= \begin{cases}
        1/3 & \text {if }\vote=\{c_2,c_4\}\text{ or }\vote=\{c_2,c_5\}\\
        1/6 & \text{if }\vote=\{c_1,c_2,c_3\}\text{ or }\vote=\{c_4,c_5\}\\
        0&\text{otherwise}
    \end{cases}.
\end{align}
Define the binary vector $\impr_{W, W'} \in \{0,1\}^{2^\alt}$ for each $W \in \calM_k$ and $W' \in \calM_{\leq k}$ to have entry $1$ at index $\vote \subseteq \alt$ if and only if $|\vote \cap W'| > |\vote \cap W|$. That is, $\impr_{W, W'}$ indicates which votes $\vote$  strictly prefer $W'$ over $W$. 

\begin{lemma}[\citealp{Xia24:Linear}]
\label{lemma:core linear}
    Given $m,k,n,$ and profile $\profile$, a $k$-committee $W$ is core-stable if and only if
    $\impr_{W,W'}^T \bfx - \frac{|W'|}{k} < 0$ for $\profile$'s vote distribution $\bfx$ and for all $W' \in \calM_{\leq k}$.
\end{lemma}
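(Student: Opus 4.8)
The plan is to show that the asserted inequality is nothing more than \Cref{eq:core stability} rewritten in terms of the vote distribution $\bfx$ rather than the raw profile $\profile$, so that the whole statement is a definitional reformulation proved by a single chain of equivalences. First I would recall that, by definition, $\bfx[\vote] = \frac{1}{n}\,|\{i \in N : \vote_i = \vote\}|$, so a voter count can be converted into a sum of frequencies. Crucially, the predicate $|\vote_i \cap W'| > |\vote_i \cap W|$ depends on voter $i$ only through the set $\vote_i$; this is exactly the observation that licenses regrouping voters by their approval set. Doing so gives
\[
\frac{1}{n}\,\bigl|\{i \in N : |\vote_i \cap W'| > |\vote_i \cap W|\}\bigr| \;=\; \sum_{\substack{\vote \subseteq \alt \\ |\vote \cap W'| > |\vote \cap W|}} \bfx[\vote],
\]
and by the definition of the binary vector $\impr_{W,W'}$, whose entry at index $\vote$ is $1$ exactly when $|\vote \cap W'| > |\vote \cap W|$, the right-hand side is precisely the inner product $\impr_{W,W'}^T \bfx$.

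Next I would substitute this identity into \Cref{eq:core stability}. Multiplying through by $n$ yields $|\{i \in N : |\vote_i \cap W'| > |\vote_i \cap W|\}| = n \cdot \impr_{W,W'}^T \bfx$, so the left-hand side of \Cref{eq:core stability} is $n \cdot \impr_{W,W'}^T \bfx \cdot \frac{k}{n} = k \cdot \impr_{W,W'}^T \bfx$. Hence, for a fixed $W' \in \calM_{\leq k}$, the inequality in \Cref{eq:core stability} holds if and only if $k \cdot \impr_{W,W'}^T \bfx < |W'|$. Since $k > 0$, dividing by $k$ preserves the strict inequality and gives $\impr_{W,W'}^T \bfx < \frac{|W'|}{k}$, equivalently $\impr_{W,W'}^T \bfx - \frac{|W'|}{k} < 0$. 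Quantifying over all $W' \in \calM_{\leq k}$ then shows that $W$ is core-stable \wrt $\profile$ if and only if this holds for every deviation, which is the claimed characterization.

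There is essentially no hard step here: the only things to verify are that the voter count genuinely regroups into the stated frequency sum (immediate from the definition of $\bfx$, using that the preference predicate factors through $\vote_i$) and that multiplying and dividing by the positive constants $n$ and $k$ does not flip the strict inequality. If anything, the one point worth flagging explicitly is the dependence of the deviation predicate on $i$ only through $\vote_i$, since this is precisely the ``linearity'' property that justifies passing from profiles to vote distributions in the first place, and it is what makes the rest of the paper's vote-distribution viewpoint legitimate.
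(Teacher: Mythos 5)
Your proof is correct. The paper itself states this lemma as a citation to \citet{Xia24:Linear} and does not reproduce a proof, and your argument is exactly the intended one: the predicate $|\vote_i \cap W'| > |\vote_i \cap W|$ factors through $\vote_i$, so the voter count regroups into $n \cdot \impr_{W,W'}^T \bfx$, and dividing \Cref{eq:core stability} by the positive constants $n$ and $k$ yields the stated inequality without flipping its direction.
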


\Cref{lemma:core linear} allows us to work in the vote distribution space $\Delta(2^{\alt}) \cap \Q^{2^C}$ instead of in profile spaces $(2^\alt)^n$ for varying numbers $n \in \N$. That is because if we found a vote distribution $\bfx \in \Delta(2^{\alt}) \cap \Q^{2^C}$ with an empty core, we can construct a profile $\profile$ out of it with that same property by rescaling $\bfx$ to an element of $\N^{2^C}$, and interpret its entries as the number of voters in $N$ with the respective approval sets.

Therefore, we can decide for a given $m$ and $k$ whether there exists a profile $\profile = ( \vote_i )_{i \in N}$ with \emph{any} number of voters $n$ such that the core is empty by deciding whether 
\begin{center}
    $\exists \bfx \in \Delta(2^{\alt}) \cap \Q^{2^C} \, \forall W \in \calM_k \, \exists W' \in \calM_{\leq k} :$ 
    \\
    $\impr_{W,W'}^T \bfx - \frac{|W'|}{k} \geq 0$.
\end{center}

We  reformulate this into an optimization problem.

\begin{restatable}{prop}{maxminmax}
\label{prop:maxminmax}
    Suppose $m$ and $k$ are given. Then there exists a set of voters $N$ and profile $\profile = ( \vote_i )_{i \in N}$ for which the core is empty if and only if the optimization problem
\[    \tag{\MMM}
    \label{eq:core maxminmax}
        \max_{\bfx \in \Delta(2^{\alt}) \cap \Q^{2^C}} \min_{W \in \calM_k} \max_{W' \in \calM_{\leq k}} \impr_{W,W'}^T \bfx - \frac{|W'|}{k}
\]    has nonnegative value. 
\end{restatable}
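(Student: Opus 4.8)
The plan is to unpack the nested quantifiers in the condition displayed just before the proposition using \Cref{lemma:core linear}. By that lemma, a committee $W\in\calM_k$ is core-stable for the profile with vote distribution $\bfx$ precisely when $\max_{W'\in\calM_{\leq k}}\bigl(\impr_{W,W'}^T\bfx-\tfrac{|W'|}{k}\bigr)<0$; equivalently, $W$ \emph{fails} core-stability iff this maximum is $\geq 0$. Since $0<k<m$, both $\calM_k$ and $\calM_{\leq k}$ are finite and nonempty, so the core of that profile is empty iff \emph{every} $W\in\calM_k$ fails, i.e.\ iff $\min_{W\in\calM_k}\max_{W'\in\calM_{\leq k}}\bigl(\impr_{W,W'}^T\bfx-\tfrac{|W'|}{k}\bigr)\geq 0$, which is exactly the objective of \eqref{eq:core maxminmax} evaluated at $\bfx$. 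This identity carries the logical content of the proposition and reduces everything to translating between finite profiles and their (rational) vote distributions.

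For the ``only if'' direction, given a profile $\profile$ on some $n$ voters with empty core, its vote distribution $\bfx$ lies in $\Delta(2^\alt)\cap\Q^{2^C}$ (each coordinate is a ratio of integers) and, by the identity above, makes the objective nonnegative; hence the optimal value of \eqref{eq:core maxminmax} is nonnegative. For ``if'', assuming the optimal value is nonnegative, I would first produce an \emph{actual} rational feasible $\bfx$ at which the objective is $\geq 0$ (this is the one nontrivial point, addressed below), then clear denominators to turn $\bfx$ into an integer vector of multiplicities, and build the profile that has exactly that many voters of each approval type. Its vote distribution is again $\bfx$, so by the identity (which depends only on the vote distribution, as already noted before \Cref{lemma:core linear}) its core is empty.

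The delicate step is that $\Delta(2^\alt)\cap\Q^{2^C}$ is dense but not closed, so nonnegativity of the optimal value must be shown to be \emph{witnessed} by a rational point rather than merely approached. I would resolve this by observing that the objective is a continuous piecewise-linear function of $\bfx$ with rational data: it is a pointwise minimum over $W$ of the convex piecewise-linear functions $\max_{W'}(\impr_{W,W'}^T\bfx-|W'|/k)$, so the compact rational polytope $\Delta(2^\alt)$ subdivides into finitely many rational polytopal cells on each of which the objective is affine with rational coefficients; its maximum over $\Delta(2^\alt)$ is therefore attained at a vertex of some cell, a rational point, so the supremum over $\Delta(2^\alt)\cap\Q^{2^C}$ is attained and equals that maximum. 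Equivalently, one can argue directly that $\{\bfx\in\Delta(2^\alt):\text{objective}(\bfx)\geq 0\}$ is a finite union of rational polytopes and hence contains a rational point whenever it is nonempty, combined with density of $\Q^{2^C}$ to identify the supremum over rationals with the maximum over $\Delta(2^\alt)$. Everything else is definition-chasing together with the scaling-invariance of the core, so the main obstacle is purely this rationality bookkeeping.
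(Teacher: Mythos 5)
Your proposal is correct and follows essentially the same route as the paper's proof: apply \Cref{lemma:core linear} to rewrite emptiness of the core of a given profile as nonnegativity of the inner $\min$-$\max$ at its vote distribution, and then translate between rational vote distributions and finite profiles by clearing denominators. The one place you go beyond the paper is the attainment argument showing that the supremum over $\Delta(2^{\alt}) \cap \Q^{2^C}$ is actually achieved at a rational point (via piecewise linearity with rational data); the paper's proof passes over this in silence, so your extra care closes a small gap rather than diverging from the intended argument.
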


Going beyond the question of nonnegativity, we can give the optimization in \MMM{} an interpretation \emph{\`a la} {``least core''} from cooperative game theory \citep{Shapley66:Quasi,Maschler79:Geometric}. For a fixed vote distribution $\bfx$, we define a $k$-committee $W$ to be in $\bfx$'s \emph{sizewise-least core} if it solves the inner min-max problem of \MMM{} for $\bfx$. We call the corresponding optimal objective $\mu_\bfx^*$ \emph{the value} of $\bfx$'s sizewise-least core; hence, \MMM{} essentially searches for a point $\bfx^*$ with the largest such value. Intuitively, any committee $W$ in the sizewise-least core of $\bfx$ gives rise to at most $\mu_\bfx^*$ excess voter support to any deviation $W'$ beyond what is required afford it. 
No matter the instance, the sizewise-least core always exists. Moreover, if $\mu_\bfx^*$ is negative, the sizewise-least core will be a subset of the core, making the core non-empty. In such cases, we intuitively expect committees in the sizewise-least core to be the most robust among core-stable committees against fluctuations in the vote distribution; \cf \citet{Li15:Cooperative} for an analogous study in cooperative game theory. Indeed, our interpretations above cannot be captured by notions of approximate core in prior literature (see, \emph{e.g.}, \citealp[Definition 4.11]{Lackner22:Multi}). 

In order to make the max-min-max problem \MMM{} amenable to empirical as well as theoretical analysis, we next reformulate it to a single-level optimization problem. We have tried multiple reformulations and found that a particular one yields a mixed-integer linear problem that is most efficient in practice, which we present below.
\begin{align}
 &\max_{\bfx \in \R^{2^\alt}, \, \mu \in \R, \, \bfy \in \{0,1\}^{\calM_k \times \calM_{\leq k}}} \quad  \mu \quad  \tag{\MILP}
 \\
 &\textnormal{s.t.} \, \, \sum_{\vote \subseteq \alt} \bfx[\vote] = 1 \quad \text{and} \quad \forall \vote \subseteq \alt : \bfx[\vote] \geq 0 \nonumber
\\
&\quad \, \forall W \in \calM_k : \sum_{W' \in \calM_{\leq k}} \bfy[W,W'] \geq 1 \nonumber
\\
&\quad \, \forall W \in \calM_k, W' \in \calM_{\leq k} :  \nonumber
\\
&\quad \quad \quad \mu \leq \impr_{W,W'}^T \bfx - \frac{|W'|}{k} + 3(1-\bfy[W,W']) \nonumber
\end{align}

The main idea of \MILP{} is to introduce a binary variable $\bfy[W,W']$ that evaluates to $1$ only if deviation $W'$ maximizes the inner max problem of \MMM{} for $k$-committee $W$. Despite searching over real variables, \MILP{} will admit a rational optimal solution since all of its coefficients are rational. We next show these solutions match those of \MMM.

\begin{restatable}{thm}{milp}
\label{thm:milp_equivalence}
    For any $m$ and $k$, the optimal values of \MMM{} and \MILP{} coincide, and solutions $\bfx^*$ to \MMM{} are exactly the $\bfx$-components of solutions $(\bfx^*, \mu^*, \bfy^*)$ to \MILP{}. Further, solution component $\mu^*$ represents the largest value achievable for the sizewise-least core for $m$ and $k$.
\end{restatable}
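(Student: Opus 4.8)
The plan is the standard ``big-$M$'' linearization argument, executed carefully enough to bridge the real-valued program \MILP{} and the rationality restriction built into \MMM{}. I would begin by recording one elementary bound: for every $\bfx$ in the simplex and every $W\in\calM_k$, $W'\in\calM_{\leq k}$, we have $\impr_{W,W'}^T\bfx\in[0,1]$ (a $0/1$ vector against a probability vector) and $|W'|/k\in(0,1]$, so $\impr_{W,W'}^T\bfx-\tfrac{|W'|}{k}\in(-1,1)$. This is the only place the constant $3$ in \MILP{} is used, and all that matters is that $3$ exceeds $2$; any larger constant would serve equally well.

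Next I would analyze \MILP{} by optimizing out the continuous variable $\mu$ and then the binary variables $\bfy$. For a fixed feasible pair $(\bfx,\bfy)$, the $\mu$-constraints say precisely that $\mu\le\min_{W,W'}\bigl(\impr_{W,W'}^T\bfx-\tfrac{|W'|}{k}+3(1-\bfy[W,W'])\bigr)$, so the largest feasible $\mu$ equals this minimum. Fixing $W$ and splitting the indices $W'$ into those with $\bfy[W,W']=1$ — at least one exists, by the covering constraint $\sum_{W'}\bfy[W,W']\ge 1$ — and those with $\bfy[W,W']=0$: the former contribute values $<1$, the latter values $\ge -1+3=2$, so the inner minimum is attained on the ``$\bfy=1$'' indices and equals $\min_W\min_{W':\,\bfy[W,W']=1}\bigl(\impr_{W,W'}^T\bfx-\tfrac{|W'|}{k}\bigr)<1$; in particular the ``$\bfy=0$'' constraints are slack. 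Since the support of $\bfy[W,\cdot]$ may be chosen independently for each $W$, maximizing this expression over admissible $\bfy$ with $\bfx$ fixed amounts to picking, for each $W$, a single $W'$ attaining $\max_{W'}\bigl(\impr_{W,W'}^T\bfx-\tfrac{|W'|}{k}\bigr)$; the resulting value is $\min_W\max_{W'}\bigl(\impr_{W,W'}^T\bfx-\tfrac{|W'|}{k}\bigr)$, i.e.\ the inner min--max of \MMM{} evaluated at $\bfx$. Hence the optimal value of \MILP{} equals $\max_{\bfx}\min_W\max_{W'}\bigl(\impr_{W,W'}^T\bfx-\tfrac{|W'|}{k}\bigr)$, with $\bfx$ now ranging over the \emph{real} simplex.

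It then remains to descend from the real simplex to the rational simplex appearing in \MMM{}, which I would do via the mixed-integer structure: there are only finitely many feasible $\bfy\in\{0,1\}^{\calM_k\times\calM_{\leq k}}$, and fixing any one of them reduces \MILP{} to a linear program with rational data whose feasible region is a pointed polyhedron on which the objective $\mu$ is bounded above, so its optimum is attained at a rational vertex. Taking the best over the finitely many $\bfy$, the optimal value of \MILP{} is attained at a rational point, hence coincides with the supremum of the same objective over the \emph{rational} simplex, which is exactly the value of \MMM{}; and by the definition of the sizewise-least core recalled above, this common value is $\max_{\bfx}\mu_{\bfx}^*$, the largest value achievable for any vote distribution. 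The solution correspondence then follows. If $(\bfx^*,\mu^*,\bfy^*)$ is optimal for \MILP{}, the computation above gives $\mu^*\le\min_W\max_{W'}\bigl(\impr_{W,W'}^T\bfx^*-\tfrac{|W'|}{k}\bigr)$; since the right-hand side is at most the optimal value of \MILP{}, which in turn equals $\mu^*$, all these quantities are equal, so $\bfx^*$ (a rational vertex) maximizes the \MMM{}-objective and hence solves \MMM{}. Conversely, an optimal $\bfx^*$ of \MMM{} extends to a feasible point of \MILP{} by setting $\mu^*=\min_W\max_{W'}\bigl(\impr_{W,W'}^T\bfx^*-\tfrac{|W'|}{k}\bigr)$ and letting $\bfy^*$ select the argmaxes, and its objective value — the value of \MMM{} — equals the optimal value of \MILP{}, so it is optimal.

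I expect the genuinely delicate step to be this last bridge between the real and rational formulations: verifying that after fixing $\bfy$ the real linear relaxation attains its optimum at a rational point, which is what makes the two optimal values literally equal and the $\bfx$-components match. The only subtlety in ``exactly'' is the harmless possibility that a flat optimal face of the real program contains irrational points, which would be \MILP{}-optima without literally being \MMM{}-optima; this is why one phrases the correspondence in terms of the rational (e.g.\ vertex) solutions. By comparison, the big-$M$ bound of the first step and the $\min$--$\max$ bookkeeping of the second are routine.
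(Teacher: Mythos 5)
Your proposal is correct and follows essentially the same route as the paper's proof: the $[-1,1]$ bound making the slack of $3$ vacuous for $\bfy[W,W']=0$, the selection of $\bfy$ as indicators of inner argmaxes, and the resulting identification of the optimal $\mu$ with the min--max value at $\bfx$. The only difference is cosmetic: you spell out the real-versus-rational bridge via rational vertices of the per-$\bfy$ LPs, whereas the paper compresses this into a one-line ``without loss of optimality, $\bfx$ is rational since all coefficients are rational.''
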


Intuitively, we replaced the inner min-max in \MMM{} with a continuous variable $\mu$. To ensure that this is a correct interpretation of $\mu$ given the chosen values of $\bfx$, we have to make sure that for every committee $W$, there in fact exists a $W'$ such that $\mu$ is at most $\impr_{W,W'}^T \bfx - \frac{|W'|}{k}$.  (This will ensure that $\mu$ will not be set too high; we do not need to worry that $\mu$ will be set too low, because the solver will try to maximize $\mu$.) To do so, we force the solver to set $\bfy[W,W']$ to $1$ for at least one $W'$, and for that one, the value of $\impr_{W,W'}^T \bfx - \frac{|W'|}{k}$ must be large enough because the last term in the last \MILP{} constraint disappears when $\bfy[W,W']=1$. (Whereas if $\bfy[W,W']=0$,
that  constraint automatically holds due to the slack of $3$ added by this last term. Any slack of $\geq 2$ would have sufficed here.)

\section{From Experiments to Lower Bounds}
\label{sec:experiments}

We implement \MILP{} using \Gurobi{} \citep{gurobi}, a popular commercial solver for mixed-integer, linear, and nonlinear optimization that guarantees global optimality (up to a small tolerance error) upon termination. All experiments in this paper were run up to the default tolerance of $10^{-4}$. \Cref{tab:milpvals} depicts the optimal values of \MILP{} that we computed for various $(m,k)$ pairs with $1 \leq k < m$. 

Our experiments show that for $m \leq 7$ candidates, there will always be a core-stable committee, for any committee size $k$, any number of votes $n$, and any vote profile $\profile$. It improves on \citet{Cheng19:Group}, which confirms experimentally that the core is non-empty for $m+n \leq 14$ and $k < m$. Concurrent and soon to appear work by \citet{Peters25:Core} subsumes our experimental (non-emptiness) results since it shows that the core is always non-empty if $m \leq 15$, or, alternatively, if $k \leq 8$. \citeauthor{Peters25:Core} obtains these results by focusing on when the PAV rule and modifications thereof are guaranteed to select a core-stable committee. For any larger $m$ or $k$, \citet{Peters25:Core} gives PAV failure modes, showing the limitations to his PAV-focused approach. In contrast, our \MILP{} resolves (non-)emptiness of core for given values of $m$ and $k$ in general, rather than focusing just on whether one particular voting rule selects core-stable committees. 

The limitations of our approach, however, lie in bounded computational resources. \MILP{} contains $2^m + 1$ continuous variables and $\binom{m}{k} \cdot \sum_{l=1}^k \binom{m}{l}$ binary variables (or a few less after eliminating certain $(W,W')$ pairs, such as when $W' \subseteq W$, without loss of optimality). For $k \approx \nicefrac{m}{2}$, these values grow super-polynomially in $m$. For small values of $m$, our implementation is still quite fast: on 8 cores, the instance $m=7$ and $k = 3$ takes $\leq 2.5$h time to run on our laptop, while \citet{Peters25:Core} briefly describes an implementation attempt for the same general problem that does not terminate within a time limit of 37h. If we investigate $m=8$, however, then the solver does not terminate within the timelimit of $72$h when $k=4$. More generally, we find the case $k \approx \nicefrac{m}{2}$ takes the longest to solve in our implementations. 

\begin{table}[t]
\centering\small
{
\renewcommand{\arraystretch}{1.5}
\begin{tabular}{|c||c|c|c|c||c|}
\hline
\textbf{k\textbackslash m} & \textbf{4} & \textbf{5} & \textbf{6} & \textbf{7} & \textbf{Fraction} \\
\hline
\hline
\textbf{1} & -0.5000 & -0.5000& -0.5000 &  -0.5000 & $= -\frac{1}{2}$ \\
\hline
\textbf{2} & -0.1667  & -0.1667 & -0.1667 &  -0.1667 & $= -\frac{1}{6}$ \\
\hline
\textbf{3} & -0.0833 & -0.0833 & -0.0833 &  -0.0833 & $= -\frac{1}{12}$ \\
\hline
\textbf{4} & --- & -0.0500 & -0.0500 &  -0.0500 & $= -\frac{1}{20}$ \\
\hline
\textbf{5} & --- & --- & -0.0333 &  -0.0333 & $= -\frac{1}{30}$ \\
\hline
\textbf{6} & --- & --- & --- & -0.0238 &  $= -\frac{1}{42}$ \\
\hline
\textbf{k} & --- & --- & --- & --- &$\frac{-1}{k(k+1)}$ \\
\hline
\end{tabular}
}
\caption{Optimal values of \MILP{} computed for various combinations of $m,k$. Columns $m \leq 3$ are omitted for brevity.}
\label{tab:milpvals}
\end{table}

That being said, our experiments continue to offer insights beyond just non-emptiness: we observe that the optimal values in \Cref{tab:milpvals} have a structure to them. They are independent of $m$ and take on the value $\frac{-1}{k(k+1)}$. If this were true for all $(m,k)$, the core would always be non-empty. While we cannot prove this in full generality in this paper, we can make progress towards such a result. First, we prove $\frac{-1}{k(k+1)}$ is a lower bound to the optimal value of \MILP{} \emph{for all} $m$ and $k$. 

\begin{restatable}{thm}{milplower} \label{thm:milp_lowerbound}
    For all $m,k$, $\MILP{} \geq \frac{-1}{k(k+1)}$. 
\end{restatable}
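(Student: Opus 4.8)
The goal is to exhibit, for every $m$ and $k$, a concrete vote distribution $\bfx$ whose sizewise-least core value is at least $\frac{-1}{k(k+1)}$. By Theorem~\ref{thm:milp_equivalence} it suffices to lower-bound the value of \MMM, so the plan is to guess a good $\bfx^*$ and then verify that for \emph{every} $k$-committee $W$ there exists \emph{some} deviation $W' \in \calM_{\leq k}$ with $\impr_{W,W'}^T \bfx^* - \frac{|W'|}{k} \geq \frac{-1}{k(k+1)}$.

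The natural candidate, guided by the ``$\frac{-1}{k(k+1)}$ is tight'' pattern in Table~\ref{tab:milpvals}, is a symmetric distribution. I would first try the uniform distribution over all size-$(k+1)$ (or perhaps size-$k$) subsets of $\alt$; by symmetry every $k$-committee then looks the same, so it is enough to analyze one fixed $W = \{c_1,\dots,c_k\}$. Fix such a $W$ and consider single-candidate deviations $W' = \{c\}$ for $c \notin W$ (so $|W'| = 1$ and $\frac{|W'|}{k} = \frac 1k$). A vote $\vote$ strictly prefers $\{c\}$ to $W$ iff $|\vote \cap \{c\}| > |\vote \cap W|$, i.e.\ iff $c \in \vote$ and $\vote \cap W = \emptyset$. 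The mass $\impr_{W,\{c\}}^T\bfx^*$ of such votes, summed appropriately or optimized over the choice of $c$, needs to come out to at least $\frac 1k - \frac{1}{k(k+1)} = \frac{1}{k+1}$. If the uniform-over-$(k+1)$-sets distribution does not give exactly this, I would adjust: either change the support to size-$k$ sets, or take a nonuniform symmetric mixture (e.g.\ put mass only on sets of one fixed size, or a two-parameter family mixing sizes $1$ and $k+1$) and solve the small linear system that forces the single-candidate-deviation bound to be met with equality while keeping all other deviations no better. Averaging over the $m-k$ choices of $c \notin W$ is the clean way to avoid case analysis: it shows \emph{some} $c$ achieves at least the average, so I only need the average excess support of single-candidate deviations to be $\geq \frac{-1}{k(k+1)}$.

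Concretely, the key steps in order: (1) propose the symmetric $\bfx^*$; (2) reduce to a single representative $W$ by symmetry; (3) restrict attention to the family of deviations $\{c\}$, $c\notin W$ (and check this family alone suffices — larger deviations $W'$ only need to be considered if the singleton family fails for some $W$, which symmetry rules out); (4) compute $\frac{1}{m-k}\sum_{c\notin W}\impr_{W,\{c\}}^T\bfx^*$ in closed form and verify it equals $\frac{1}{k+1}$, giving excess $\geq \frac 1{k+1} - \frac 1k = \frac{-1}{k(k+1)}$; (5) conclude via Theorem~\ref{thm:milp_equivalence} that $\MILP \geq \frac{-1}{k(k+1)}$. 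A subtlety to handle in step~(4): when $m = k+1$ there is only one candidate outside $W$, and when the support consists of $(k+1)$-sets some votes may approve all of $W$; the counting must be done carefully in these boundary regimes, and I would double-check the formula against the $m \in \{k+1, k+2\}$ rows of Table~\ref{tab:milpvals}.

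The main obstacle I anticipate is \emph{choosing the right $\bfx^*$}: the bound must be met with equality, so a generic symmetric distribution will overshoot (giving a weaker lower bound) or, worse, some \emph{other} deviation $W'$ — a two-element set, or a set overlapping $W$ — could beat the singleton bound and still not reach $\frac{-1}{k(k+1)}$, forcing me to enlarge the family of deviations considered. Pinning down the extremal distribution (I suspect it is supported on sets of a single carefully chosen size, analogous to the running example where all nonzero-mass votes have size $2$ and $k=3$) and then proving that \emph{no} deviation does better than the target for \emph{any} $W$ under that distribution is the crux; everything else is symmetric bookkeeping.
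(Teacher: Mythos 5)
Your top-level strategy is the right one---exhibit a single feasible vote distribution $\bfx$ and, for each $k$-committee $W$, point to one singleton deviation whose support is at least $\frac{1}{k+1}$, so that the excess is $\frac{1}{k+1}-\frac{1}{k}=\frac{-1}{k(k+1)}$---and this is exactly the shape of the paper's proof. But the proposal stops short of the one idea that makes it work, and your leading candidate constructions fail. The paper's witness is \emph{not} symmetric over candidates: fix a subset $B\subseteq\alt$ with $|B|=k+1$ and put mass $\frac{1}{k+1}$ on each singleton vote $\{c\}$ with $c\in B$ (and $0$ elsewhere). Every $k$-committee $W$ misses some $c_W\in B$, and the deviation $\{c_W\}$ is supported by exactly the voters with vote $\{c_W\}$, i.e.\ mass $\frac{1}{k+1}$, which finishes the argument. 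By contrast, the candidate-symmetric distributions you propose do not reach $\frac{1}{k+1}$ once $m>k+1$: uniform over all singletons gives each singleton deviation support only $\frac{1}{m}$, and uniform over all $(k+1)$-subsets gives a vote $\vote\ni c$ with $\vote\cap W=\emptyset$ probability $\binom{m-k-1}{k}/\binom{m}{k+1}$, which is $0$ at $m=k+1$ and small for large $m$. The symmetry reduction (``every $W$ looks the same'') is precisely what must be abandoned; what saves the asymmetric construction is that $|B|=k+1>|W|$ still guarantees every committee misses a candidate of $B$.

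A second, smaller issue: your closing paragraph asserts that proving ``no deviation does better than the target for any $W$'' is the crux. For a \emph{lower} bound on a max-min-max this is not needed at all---another deviation doing better only increases the inner $\max$, which helps you. (Ruling out better deviations is what the \emph{upper}-bound results of Section 5 require, via duality.) In the \MILP{} formulation this is automatic: set $\bfy[W,W']=1$ only for $W'=\{c_W\}$; every constraint with $\bfy[W,W']=0$ is satisfied thanks to the slack term $3(1-\bfy[W,W'])$. Dropping that worry, and replacing the symmetric ansatz with the $B$-supported singleton profile, turns your outline into the paper's proof.
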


\begin{proof}
    Since \MILP{} is a maximization, we can prove this lower bound by giving a feasible variable assignment to \MILP{} that achieves objective value $\frac{-1}{k(k+1)}$. Given pair $k<m$, fix a subset of candidates $B \subseteq \alt$ with $|B|=k+1$. For each $\vote \subseteq \alt$, set $\bfx[\vote] \coloneq \begin{cases}
        \frac{1}{k+1} &\text{if }\vote=\{c\}\text{ with }c\in B \\ 0 &\text{otherwise} 
    \end{cases}.$ For each $W \in \calM_{k}$, fix some $c_W \in B \setminus W$ (such a $c_W$ exists as $|B|>|W|$). Set $\bfy[W,W'] \coloneq \begin{cases}
        1 &\text{if }W'=\{c_W\}\\
        0 &\text{otherwise}
    \end{cases}.$ Finally, set $\mu = \frac{-1}{k(k+1)}$, which is also equal to the objective. It is straightforward to check that this assignment is feasible for \MILP{}. 
    In particular, for any $W$ and for $W'=\{c_W\}$, we have $\impr_{W,W'}^T \bfx = \bfx[\{c_{W}\}]$, since $|W \cap \{c_W\}| =0$. Further,
    \begin{center}
        $\impr_{W,W'}^T \bfx - \frac{|W'|}{k} + 3(1-\bfy[W,W']) =  \bfx[\{c_{W}\}] - \frac{1}{k}$
        \\ 
        $= \frac{1}{k+1} -\frac{1}{k}=\frac{-1}{k(k+1)}= \mu$,
    \end{center}
    indicating that for the above $\bfx$ and $\bfy$ values, the $\mu$ is chosen maximal while staying feasible.
\end{proof}

\Cref{thm:milp_lowerbound} shows that no profile can have a sizewise-least core with value less than $\frac{-1}{k(k+1)}$. In the proof, this value emerges after subtracting $\frac{1}{k}$ (the cost of a deviation of size 1) from $\frac{1}{k+1}$ (the maximum amount of support any such deviation will get in our assignment). The former value originates from the intuition that a coalition is entitled to make decisions about a seat in the committee only if they comprise at least $\frac{1}{k}$ fraction of the total electorate, a quantity also known as the \emph{Hare quota}. The latter value, on the other hand, is reminiscent of an alternative intuition, requiring the coalition to be \emph{strictly} greater than a fraction of $\frac{1}{k+1}$, also known as the \emph{Droop quota}. Defining core stability with respect to the latter leads to a strictly stronger criterion \citep{Brill20:Approval}. The next definition introduces this notion in our framework. 
\begin{defn}[Droop core]\label{def:droopcore}  Given $m$ and $k$, and a profile $\profile$, a  $k$-committee $W \in \calM_k$ is \emph{Droop core-stable} if 
$\impr_{W,W'}^T \bfx - \frac{|W'|}{k+1} \leq 0$  for $\profile$'s vote distribution $\bfx$ and for all $W' \in \calM_{\leq k}$.
\end{defn}

\MILP{} can be easily edited to search for profiles with an empty Droop core (again for given $m,k$) by replacing each occurrence of $k$ in a denominator with $k+1$. If we call this analogous program \Droop{}, then an empty Droop core is found if its objective value is \emph{strictly larger} than 0. The proof of \Cref{thm:milp_lowerbound} then implies for any $m,k$, there is always a profile sitting at the boundary of an empty Droop core.

\begin{restatable}{cor}{droop}\label{cor:droop-core}
    For all $m,k$, \Droop{} $\geq 0$.
\end{restatable}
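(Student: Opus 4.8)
The plan is to re-run the argument behind \Cref{thm:milp_lowerbound} almost verbatim, exploiting that \Droop{} is obtained from \MILP{} by replacing every denominator $k$ with $k+1$ while leaving the objective $\mu$ untouched. Thus the only constraint that changes is the single-level one, which becomes $\mu \leq \impr_{W,W'}^T \bfx - \frac{|W'|}{k+1} + 3(1-\bfy[W,W'])$. Since \Droop{} is a maximization, it suffices to exhibit a feasible point whose objective value is $0$.

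Concretely, I would reuse the exact same $\bfx$ and $\bfy$ as in the proof of \Cref{thm:milp_lowerbound}: fix $B \subseteq \alt$ with $|B| = k+1$, let $\bfx$ be uniform over the $k+1$ singleton votes $\{c\}$ for $c \in B$ and zero elsewhere, and for each $W \in \calM_k$ pick some $c_W \in B \setminus W$ (possible since $|B| > |W|$) and set $\bfy[W,\{c_W\}] = 1$ with all other $\bfy$-entries zero. Now, however, set $\mu = 0$. The simplex constraints on $\bfx$ and the ``at least one deviation per committee'' constraints on $\bfy$ hold for exactly the same reasons as before. For the single-level constraints: as computed in that proof, the only vote in the support of $\bfx$ that strictly prefers $\{c_W\}$ to $W$ is $\{c_W\}$ itself (because $c_W \notin W$), so $\impr_{W,\{c_W\}}^T \bfx = \bfx[\{c_W\}] = \frac{1}{k+1}$. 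Hence the constraint for the pair $(W,\{c_W\})$, where $\bfy = 1$, reads $\mu \leq \frac{1}{k+1} - \frac{1}{k+1} + 0 = 0$, which is tight; and every constraint for a pair $(W,W')$ with $\bfy[W,W']=0$ has right-hand side $\geq -\frac{k}{k+1} + 3 > 2 \geq \mu$. So the assignment is feasible with objective value $0$, giving $\Droop{} \geq 0$.

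I do not anticipate a genuine obstacle here: the whole effect is that subtracting the Droop cost $\frac{1}{k+1}$ of a size-$1$ deviation exactly cancels the $\frac{1}{k+1}$ support mass it attracts in this construction, so the largest feasible $\mu$ rises from $\frac{-1}{k(k+1)}$ (in \MILP{}) to $0$ (in \Droop{})---matching the intuition, flagged in the surrounding text, that this construction sits right at the boundary of an empty Droop core. The only point meriting a second look is confirming that no vote other than $\{c_W\}$ contributes to $\impr_{W,\{c_W\}}^T \bfx$, which is immediate since $\bfx$ is supported on singletons inside $B$ and $\{c\}$ strictly prefers $\{c_W\}$ to $W$ precisely when $c = c_W$.
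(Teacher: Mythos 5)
Your proof is correct and is essentially identical to the paper's: both reuse the feasible assignment from \Cref{thm:milp_lowerbound} with $\mu=0$ and observe that the Droop cost $\frac{1}{k+1}$ of the singleton deviation $\{c_W\}$ exactly cancels its support $\bfx[\{c_W\}]=\frac{1}{k+1}$. Your additional check of the constraints with $\bfy[W,W']=0$ is a fine (and correct) bit of extra diligence.
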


Indeed, when we run \Droop{} for the values of $m,k$ in \Cref{tab:milpvals}, the program converges to a value of $0$, showing that the Droop core is always non-empty for these values.\footnote{Relatedly, PAV satisfies Droop core for $k \leq 5$ \citep{Peters25:Core}.} Further, \Cref{cor:droop-core} indicates that Droop quota is the ``best we can hope for'' if we require non-emptiness, since for any $m$ and $k$, core notions with any smaller quota (giving deviating coalitions more power) will be empty for some vote distribution $\bfx$. A similar observation was previously made by \citet[Remark 3.6]{Jason18:Thresholds} for related fairness properties. 

Next, we study when the lower bounds of \Cref{thm:milp_lowerbound} and \Cref{cor:droop-core} can be matched with an identical upper bound. 

\section{Upper Bounds Using Duality}\label{sec:dual}
    We derive a general strategy for proving upper bounds on $\MILP{}$, and prove an upper bound of $\frac{-1}{k(k+1)}$ for certain special cases. More generally, our approach provides a novel way of proving core non-emptiness results.
    
    The key observation in our approach is that for any fixed values of the integer variables $\bfy$ in \MILP{}, we get a linear program over variables $\mu$ and $\bfx$, the optimal value of which is the largest value \MILP{} can achieve using that $\bfy$. Without loss of optimality of \MILP{}, we can restrict our attention to instances of $\bfy$ that have $\bfy[W,W']=1$ for only one deviation $W' \in \calM_{\leq k}$ for each committee $W \in \calM_k$ (call this deviation $D_W$). Each such $\bfy$ then corresponds to a \emph{deviation function} $D: \calM_{k} \rightarrow \calM_{\leq k}$ such that $D(W)= D_W$ for each $W \in \calM_k$. Taking the dual of the linear program associated with $D$ (and performing some simplification steps to significantly decrease the number of variables; see the proof of \Cref{thm:dual_eq}), we obtain the following linear program.
     \begin{align}
        &\min_{\bfq \in \R^{\calM_k}, u \in \R} \quad  u - \sum_{W \in \calM_k} \frac{|D(W)|}{k}\bfq[W] \tag{\DLP} 
         \\
        &\textnormal{s.t.} \, \, \sum_{W \in \calM_k} \bfq[W] = 1 \, \, \text{and} \, \, \forall W \in \calM_k : \bfq[W] \geq 0 \nonumber
        \\
        &\quad \, \forall \vote \subseteq \alt : \sum_{\substack{{W \in \calM_k:} \\{|D(W) \cap \vote| > |W \cap \vote| }}}  \bfq[W] \quad \leq u \nonumber
    \end{align}
    
    As with \MILP{} and \Droop{}, we refer to the analogous linear program for Droop core (where the $k$ in the objective is replaced with $k+1$) as \DrDLP. Since the optimal value of \MILP{} is the optimal value over all assignments of $\bfy$, we can use strong duality and obtain an upper bound on \MILP{} by bounding \DLP{} across all deviation functions.
    \begin{restatable}{thm}{dual} \label{thm:dual_eq}
        Given $m,k,$ and value $ v\in \R$, we have $\MILP{} \leq v$ (resp.\ $\Droop{} \leq v$) if and only if $\DLP{} \leq v$ (resp.\ $\DrDLP \leq v$) for all functions $D: \calM_k \rightarrow \calM_{\leq k}$.
    \end{restatable}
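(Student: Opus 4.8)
The plan is to exploit the fact that \MILP{} decomposes, over its binary variables, into a finite family of ordinary linear programs, one per \emph{deviation function}, and then to dualize each member of that family. First I would argue that we may restrict attention, without loss of optimality, to binary vectors $\bfy$ that place a single $1$ in each row $W \in \calM_k$: given any feasible $(\bfx,\mu,\bfy)$, keeping one $W'$ with $\bfy[W,W']=1$ per $W$ and zeroing out the remaining entries only \emph{adds} the slack term $3(1-\bfy[W,W'])$ to the affected constraints, so the modified point is still feasible with the same objective $\mu$. Such a $\bfy$ is exactly the indicator of a deviation function $D\colon \calM_k \to \calM_{\leq k}$. Fixing $D$ and discarding the now-vacuous constraints leaves the linear program $\mathrm{LP}_D$ in variables $\mu \in \R$ and $\bfx \in \R^{2^\alt}$:
\[ \max \ \mu \quad \text{s.t.} \quad \textstyle\sum_{\vote \subseteq \alt} \bfx[\vote] = 1, \ \ \bfx \geq 0, \ \ \mu \leq \impr_{W,D(W)}^T \bfx - \tfrac{|D(W)|}{k} \ \ \forall W \in \calM_k, \]
and by the reduction above $\MILP{} = \max_{D} \mathrm{opt}(\mathrm{LP}_D)$; the identical statement holds for \Droop{} with $k$ replaced by $k+1$ throughout.

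Next I would take the LP dual of $\mathrm{LP}_D$. Assigning a free multiplier $u$ to the simplex equality and nonnegative multipliers $\bfq[W]$ to the committee constraints, the dual constraint associated with the free variable $\mu$ is the equality $\sum_{W} \bfq[W] = 1$, the dual constraint associated with each $\bfx[\vote]\ge 0$ is $u \geq \sum_{W:\,|D(W)\cap\vote|>|W\cap\vote|} \bfq[W]$, and the dual objective is $u - \sum_{W} \tfrac{|D(W)|}{k}\bfq[W]$; after dropping redundant constraints this is precisely \DLP{} (resp.\ \DrDLP). Strong LP duality then applies because $\mathrm{LP}_D$ is feasible (take any $\bfx$ in the simplex and set $\mu$ to the induced minimum) and bounded above (since $\impr_{W,D(W)}^T \bfx \in [0,1]$ forces $\mu < 1$), so $\mathrm{opt}(\mathrm{LP}_D) = \mathrm{opt}(\DLP)$ for every $D$. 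Combining this with $\MILP{} = \max_{D} \mathrm{opt}(\mathrm{LP}_D)$ shows that $\MILP{} \leq v$ holds if and only if $\mathrm{opt}(\DLP) \leq v$ for all deviation functions $D$; the Droop variant is obtained verbatim by the substitution $k \mapsto k+1$.

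I expect the bulk of the work to lie in the dualization and its attendant simplification: putting $\mathrm{LP}_D$ into a standard form, carefully tracking which primal constraints are equalities versus inequalities (hence which dual variables are free versus nonnegative) so that the dual constraints come out with the correct signs, and verifying that the preconditions for strong LP duality are met for \emph{every} $D$ so that no duality gap can arise. By contrast, the reduction to deviation functions and the concluding observation that $\max_{D}(\cdot) \leq v$ is equivalent to $(\cdot) \leq v$ for all $D$ are routine once that machinery is set up.
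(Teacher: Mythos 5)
Your proposal is correct and follows essentially the same route as the paper's proof: restrict without loss of optimality to $\bfy$ that are indicators of deviation functions, decompose \MILP{} into the family of linear programs indexed by $D$, dualize each one, and invoke strong duality before taking the maximum over $D$. The only cosmetic difference is that you discard the redundant slack-$3$ constraints from the primal before dualizing, whereas the paper dualizes the full LP and then argues that the dual variables $\bfq[W,W']$ with $W' \neq D(W)$ may be set to zero; both yield \DLP{} and the same conclusion.
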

    
    Furthermore, a closer look at \DLP{} shows that $\bfq$ defines a probability distribution over $\calM_k$. Combining this observation with \Cref{lemma:core linear} and \Cref{thm:dual_eq} lends itself to a novel way of formulating the core non-emptiness problem.

    \begin{restatable}{cor}{lottery}\label{cor:prob}
        Given $m$ and $k$, the core (resp.\ Droop core) is non-empty for all profiles / vote distributions if and only if the following statement is true:
        
        \noindent\fbox{\begin{minipage}{\columnwidth-0.7em}      
        For every function $D: \calM_k \rightarrow \calM_{\leq k}$, there exists a distribution $\bfq \in \Delta(\calM_k)$ s.t.~for all votes $\vote \subseteq \alt$, we have
        \begin{center}
            $\displaystyle \underset{W \sim \bfq}{\Prob}\big[|W \cap \vote| < |D(W) \cap \vote| \big] \underset{(\textnormal{resp.\ } \leq)}{<} \, \frac{ \underset{W \sim \bfq}{\Mean}\big[\left|D(W)\right|\big]}{k \, (\textnormal{resp.\ } k+1)}$
        \end{center}
        \end{minipage}}
    \end{restatable}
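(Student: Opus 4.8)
The plan is to chain together the characterizations of core (resp.\ Droop core) non-emptiness that the paper has already assembled, and then to read the boxed condition directly off the structure of \DLP{} (resp.\ \DrDLP{}). By \Cref{prop:maxminmax} together with \Cref{thm:milp_equivalence}, the core is non-empty for every profile (equivalently, for every rational vote distribution, via \Cref{lemma:core linear}) precisely when $\MILP{} < 0$; and by the analogous reasoning for the Droop quota described after \Cref{def:droopcore}, the Droop core is always non-empty precisely when $\Droop{} \le 0$. So it suffices to translate the conditions ``$\MILP{} < 0$'' and ``$\Droop{} \le 0$'' into the boxed statement, and this is exactly what \Cref{thm:dual_eq} and a little unpacking of \DLP{} will do.

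The first step is to turn the threshold statement of \Cref{thm:dual_eq} into a maximum over deviation functions. Since there are only finitely many $D : \calM_k \to \calM_{\le k}$, \Cref{thm:dual_eq} yields $\MILP{} = \max_D \DLP{}$ and $\Droop{} = \max_D \DrDLP{}$: apply the theorem with $v = \MILP{}$ (forcing $\DLP{} \le \MILP{}$ for every $D$) in one direction, and with $v = \max_D \DLP{}$ in the other, and likewise for the Droop versions. Consequently $\MILP{} < 0$ holds iff $\DLP{} < 0$ for \emph{every} $D$, and $\Droop{} \le 0$ holds iff $\DrDLP{} \le 0$ for every $D$; finiteness of the set of deviation functions is precisely what lets the (strict or non-strict) inequality pass through the maximum.

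The second step is to unpack a single \DLP{}. The constraints $\sum_{W} \bfq[W] = 1$ and $\bfq[W] \ge 0$ force $\bfq \in \Delta(\calM_k)$, and for a fixed such $\bfq$ the objective $u - \sum_W \tfrac{|D(W)|}{k}\bfq[W]$ is minimized by taking $u$ as small as its constraints allow, namely $u = \max_{\vote \subseteq \alt} \sum_{W :\, |D(W)\cap\vote| > |W\cap\vote|} \bfq[W]$. Hence
\[
  \DLP{} \;=\; \min_{\bfq \in \Delta(\calM_k)} \Bigl(\, \max_{\vote \subseteq \alt}\; \underset{W \sim \bfq}{\Prob}\bigl[\,|W\cap\vote| < |D(W)\cap\vote|\,\bigr] \;-\; \tfrac{1}{k}\,\underset{W \sim \bfq}{\Mean}\bigl[\,|D(W)|\,\bigr] \Bigr),
\]
and this minimum is attained because $\Delta(\calM_k)$ is compact and the bracketed function is continuous (a finite maximum of linear functions minus a linear function). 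Therefore $\DLP{} < 0$ iff some $\bfq \in \Delta(\calM_k)$ makes the bracket negative, iff some $\bfq$ satisfies $\Prob_{W\sim\bfq}[\,|W\cap\vote| < |D(W)\cap\vote|\,] < \tfrac{1}{k}\Mean_{W\sim\bfq}[\,|D(W)|\,]$ simultaneously for \emph{all} votes $\vote \subseteq \alt$ (a finite maximum is negative iff each of its finitely many terms is). Running the identical computation for \DrDLP{}, with $k$ replaced by $k+1$, gives the ``$\le$'' version: $\DrDLP{} \le 0$ iff some $\bfq$ satisfies $\Prob_{W\sim\bfq}[\,|W\cap\vote| < |D(W)\cap\vote|\,] \le \tfrac{1}{k+1}\Mean_{W\sim\bfq}[\,|D(W)|\,]$ for all $\vote$. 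Combining these with the two equivalences from the first step yields the boxed statement and its ``resp.'' counterpart.

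There is no genuinely hard step; the one place that needs care is the strict-versus-non-strict distinction. Core non-emptiness corresponds to the \emph{strict} bound $\MILP{} < 0$ (because \Cref{eq:core stability} is a strict inequality), whereas Droop core non-emptiness corresponds to the non-strict bound $\Droop{} \le 0$, and one must check that the relevant inequality survives (a) passing through $\max_D$, which is fine because there are finitely many deviation functions, and (b) passing from ``$\min_{\bfq}(\cdots) < 0$'' to ``there exists a $\bfq$ making every $\vote$-term strictly negative,'' which is fine because the minimum is attained and there are finitely many votes. Everything else is bookkeeping.
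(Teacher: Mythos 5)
Your proposal is correct and follows essentially the same route as the paper: chain \Cref{prop:maxminmax}, \Cref{thm:milp_equivalence}, and \Cref{thm:dual_eq} to reduce core (resp.\ Droop core) non-emptiness to ``$\DLP{}<0$ (resp.\ $\DrDLP{}\le 0$) for all deviation functions $D$,'' then reinterpret $\bfq$ as a distribution and the optimal $u$ as $\max_{\vote}\Prob_{W\sim\bfq}[|W\cap\vote|<|D(W)\cap\vote|]$. Your write-up is somewhat more explicit than the paper's about attainment of the minimum and about how strict versus non-strict inequalities pass through the finite maxima, but the substance is identical.
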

    
   The boxed statement in \Cref{cor:prob} is notably similar to a lemma proven by \citet[Inequality (3)]{Cheng19:Group} which implies the existence of \emph{stable lotteries} (distributions over $\calM_k$ such that the expected support of any deviation is not sufficient), a weaker result than core non-emptiness. In \Cref{appsec:connection}, we reinterpret their formulation (where $D$ is replaced by distributions over $\calM_{\leq k}$) in our framework. We believe \Cref{cor:prob} lays the foundations for a probabilistic analysis towards proving core non-emptiness, possibly similar to \citeauthor{Cheng19:Group}'s approach for stable lotteries. 
    
     Next, we illustrate the strength of \Cref{thm:dual_eq} by proving upper bounds to \MILP{} in special cases. Our results improve on previously known results, implying them as corollaries.

    \paragraph{Small deviations} The example profile we used in the proof of \Cref{thm:milp_lowerbound} relied on singleton deviations ($|D(W)|=1$ for all $W$). Indeed, in all of the the experiments in \Cref{tab:milpvals}, we see that the objective-maximizing values of $\bfy$ correspond to singleton deviations. We now formalize this intuition by showing that our lower bound for \MILP{} is tight for singleton deviations, and the objective can only get worse when we add a single non-singleton deviation.

    \begin{thm}\label{thm:singleton}
        Say we are given $m,k,$ and a deviation function $D$ such that $|D(W)|=1$ for all $W \in \calM_{k}$ but possibly one $W^*$. Then, $\DLP{} \leq -\frac{1}{k(k+2-|D(W^*)|)}$ (resp.\ $\DrDLP{} \leq 0$).
    \end{thm}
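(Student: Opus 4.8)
The plan is to prove the bound by exhibiting an explicit feasible point of \DLP{} (and the very same point is then feasible for \DrDLP{}, since the two programs share all constraints and differ only in their objective); as both are minimizations, this gives the claimed upper bounds. Write $s:=|D(W^*)|$ and $R^*:=D(W^*)$, so $s\in\{1,\dots,k\}$ (if $D$ is a pure singleton function, simply designate an arbitrary committee as $W^*$, so $s=1$). Call a committee $W$ \emph{active} at a vote $\vote$ if $|D(W)\cap\vote|>|W\cap\vote|$, i.e.\ if $W$ contributes to the $\vote$-constraint of \DLP{}. Following the paper's convention of discarding pairs $(W,W')$ with $W'\subseteq W$, I would first assume $D(W)\not\subseteq W$ for all $W$ (so $R^*\not\subseteq W^*$, and each committee $W$ with $|D(W)|=1$ has $D(W)=\{c_W\}$ with $c_W\notin W$). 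If instead $D(W)\subseteq W$ for some $W$, then $W$ is active at no vote, so putting all of $\bfq$ on $W$ with $u=0$ is feasible and yields objective $-|D(W)|/k\le -1/k\le -\tfrac1{k(k+2-s)}$ for \DLP{} (resp.\ $-|D(W)|/(k+1)<0$ for \DrDLP{}), settling that case.

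Under this standing assumption, the heart of the argument is a greedy construction of $k+1-s$ singleton committees $W_1,\dots,W_{k+1-s}\in\calM_k$, pairwise distinct and all distinct from $W^*$, such that (i) $R^*\subseteq W_j$ for every $j$ and (ii) $c_{W_i}\in W_j$ whenever $i<j$. Having fixed $W_1,\dots,W_{t-1}$ for some $t\le k+1-s$, the set $R^*\cup\{c_{W_1},\dots,c_{W_{t-1}}\}$ has at most $s+(t-1)\le k$ elements, so it extends to some $k$-set $W_t$ (and since $m>k$ there is always enough room to pad up to size $k$). This $W_t$ differs from $W^*$ because it contains $R^*\not\subseteq W^*$, and it differs from each earlier $W_i$ because it contains $c_{W_i}$, which is not in $W_i$.

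Next I would verify that \emph{at most one} of the $k+2-s$ committees $W^*,W_1,\dots,W_{k+1-s}$ is active at any given vote $\vote$. A singleton committee $W_j$ is active at $\vote$ precisely when $c_{W_j}\in\vote$ and $W_j\cap\vote=\emptyset$; hence if $W_i$ and $W_j$ ($i<j$) were both active at $\vote$, then $c_{W_i}\in\vote$ while $c_{W_i}\in W_j$ and $W_j\cap\vote=\emptyset$, a contradiction. And if $W_j$ is active at $\vote$, then $W_j\cap\vote=\emptyset$, so $R^*\cap\vote=\emptyset$ by (i), giving $|R^*\cap\vote|=0\le|W^*\cap\vote|$, so $W^*$ is not active at $\vote$. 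Consequently, taking $\bfq$ uniform on $\{W^*,W_1,\dots,W_{k+1-s}\}$ and $u=\tfrac1{k+2-s}$ is feasible. Since $\sum_{W}|D(W)|\,\bfq[W]=\tfrac1{k+2-s}\big(s+(k+1-s)\big)=\tfrac{k+1}{k+2-s}$, the objective of \DLP{} equals $\tfrac1{k+2-s}-\tfrac1k\cdot\tfrac{k+1}{k+2-s}=\tfrac{-1}{k(k+2-s)}$, and the objective of \DrDLP{} equals $\tfrac1{k+2-s}-\tfrac1{k+1}\cdot\tfrac{k+1}{k+2-s}=0$.

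The one genuinely delicate point is the greedy construction: verifying the counting inequality $s+(t-1)\le k$ that simultaneously accommodates $R^*$ and all previously chosen deviation candidates, and confirming that distinctness of the $W_j$ from each other and from $W^*$ is \emph{forced} (not assumed). Everything else — feasibility of the exhibited point, the two objective computations, and the degenerate $D(W)\subseteq W$ case — is routine bookkeeping. Note finally that this argument also covers the pure-singleton case $s=1$, where it yields $\DLP{}\le -\tfrac1{k(k+1)}$, matching the lower bound of \Cref{thm:milp_lowerbound}.
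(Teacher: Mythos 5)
Your proof is correct and follows essentially the same route as the paper's: the same greedy construction of nested committees each containing all previously chosen deviations, the uniform distribution over them with $u=\tfrac{1}{k+2-s}$, the same ``at most one active committee per vote'' claim, and the identical objective computation. Your only departures are extra (welcome) care in making distinctness of the $W_j$ explicit and in separating out the degenerate case $D(W)\subseteq W$, which the paper handles implicitly.
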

    \begin{proof}
        Start with $W_1 \coloneq W^*$, and say $t \coloneq |D(W^*)|$. For $i = \{2,\ldots,k+2-t\}$, pick $W_i$ to be some arbitrary committee in $\calM_k$ such that $\bigcup_{j \in [i-1]} D(W_j) \subseteq W_i$  (the union will contain at most $i-2+t \leq k$ elements). Set $\bfq \in  \Delta(\calM_k)$ as
        $\bfq[W] = \begin{cases}
                \frac{1}{k+2-t} &\text{if }W=W_i\text{ for some }i \in [k+2-t]\\
                0 &\text{otherwise}
            \end{cases}.$
        Fix any $\vote \subseteq \alt$. We claim that the inequality $|\vote \cap W_i| < |\vote \cap D(W_i)|$ can be true for at most one $ i\in [k+2-t]$. Say $i^*$ is the smallest $i\in [k+2-t]$ such that $\vote \cap D(W_i) \neq \emptyset$ (if no such $i^*$ exists, we are done). For any $j<i^*$, we have $|\vote \cap D(W_j)|=0$, so the inequality must be false. For any $j> i^*$, we have $D(W_{i^*}) \subseteq W_j$; therefore, $|\vote \cap W_{j}| \geq |\vote \cap D(W_{i^*})| \geq 1 = |D(W_j)| \geq  |\vote \cap D(W_j)|$, so the inequality must be once again false. Hence, the inequality can only be true for $i^*$, implying $u=\frac{1}{k+2-t}$ with the above $\bfq$ is a feasible assignment to \DLP{} (and \DrDLP). Further, we have $\sum_{W \in \calM_k} |D(W)| \cdot \bfq[W]=\frac{t+(k+1-t)}{k+2-t}=\frac{k+1}{k+2-t}$. Dividing by $k$ (resp.\ $k+1$) and subtracting from $u$ gives the desired upper bound for \DLP{} (resp.\ \DrDLP).
    \end{proof}

    \Cref{thm:singleton} has several implications. First, for $t=1$, it shows that the lower bound for \MILP{} is exactly met when the deviations are restricted to singletons; further, when we add one non-singleton deviation, we only get \emph{farther} from core emptiness. It also opens up a novel possibility for proving core non-emptiness in general, namely, by proving a similar effect from adding further non-singleton deviations to $D$. 
    Second, \Cref{thm:singleton} implies a known result as a corollary: a weaker version of the core named \emph{justified representation} (where deviations are restricted to $\calM_1$ rather than $\calM_{\leq k}$) can always be satisfied, regardless of whether we use Hare or Droop quota; indeed, the voting rule PAV is known to satisfy it with either quota \citep{Aziz17:Justified,Jason18:Thresholds}.
    
    Adapting our proof of \Cref{thm:singleton} to broader classes of deviation functions presents several challenges.\footnote{For example, say $|D(W)|=2$ for all $W \in \calM_k$, rather than singletons. Using the same construction as above, where for each $i\geq 1 $ we force $\cup_{j<i }D(W_j) \subseteq W_i$, we can only pick $\approx \frac{k}{2}$ committees before the union has more elements than a single committee can contain. Further, there may be an $\vote \subseteq \alt$ with $|W_i \cap \vote| < |D(W_i) \cap \vote|$ for multiple $i$ (\emph{e.g.}, since $|D(W_1) \cap \vote| = 1$ and $|D(W_2) \cap \vote| = 2$). The left-hand side of \Cref{cor:prob} will then be $\approx \frac{2}{k/2}$ for the uniform distribution over $\{W_i\}$, which violates the inequality as the right-hand side is $\frac{2}{k}$.} Nonetheless, an analogous construction can lead us to proving novel non-emptiness results, as we show next.

    \paragraph{Large committees} We turn to the setting of $m=k+1$, for any $k$. In words, the problem is to select a single candidate that will not be in the committee. Unlike the opposite extreme of $k=1$ (in which case any candidate that is approved by some voter is a core-stable 1-committee, and picking the candidate approved by the most voters is sufficient for Droop core), the core is not trivially non-empty when $m=k+1$, as the next example shows.
    \begin{ex}\label{ex:large-core}
        Consider the profile from \Cref{ex:simple_profile}, the vote distribution of which is in \eqref{eq:ex-profile}, this time for $k=m-1=4$. Fix $W=\{c_1,c_3,c_4,c_5\}$ and $W'=\{c_2,c_4,c_5\}$. We have
        \begin{center}
            $\impr_{W,W'}^T \bfx =\bfx[\{c_2,c_4\}]+\bfx[\{c_2,c_5\}]= \frac{2}{3}>\frac{3}{5}= \frac{|W'|}{k+1}$,
        \end{center}
        showing $W$ is not in the Droop core. If we slightly modify the profile such that $\vote_1=\{c_2\}$ instead, then $W$ is not even in the core, as certified by the same deviation $W'$.
    \end{ex}
    Nevertheless, our next result shows that the core is always non-empty in this case, even with Droop quota. 
    \begin{restatable}{thm}{kplusone}\label{thm:m=k+1}
        For any $m$ and $k=m-1$, given any deviation function $D$, we have $\DLP \leq -\frac{1}{k(k+1)}$ (resp.\ $\DrDLP{} \leq 0$).
    \end{restatable}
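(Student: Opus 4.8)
The plan is to prove both inequalities by exhibiting an explicit feasible point of \DLP{} (resp.\ \DrDLP{}) whose objective already attains the claimed value; since these programs are minimizations, a single feasible point suffices.

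I would begin by specializing notation to $m = k+1$. Here $\calM_k$ has exactly $m$ elements, namely the committees $W_i := \alt \setminus \{c_i\}$ for $i \in [m]$, and each $D(W_i)$ is a nonempty candidate set of size at most $m-1$. The first step is a clean description of improvement in this regime: writing $d_i := |D(W_i)| \ge 1$, and noting that $|W_i \cap \vote|$ equals $|\vote|$ when $c_i \notin \vote$ and $|\vote|-1$ when $c_i \in \vote$, while $|D(W_i) \cap \vote| \le |\vote|$ always, one checks that $|D(W_i) \cap \vote| > |W_i \cap \vote|$ holds if and only if $c_i \in \vote \subseteq D(W_i)$.

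The crux is then the following observation about the constraint rows of \DLP{}. Fix $\vote \subseteq \alt$ and let $I_\vote := \{i \in [m] : c_i \in \vote \subseteq D(W_i)\}$ be the index set of committees that ``improve'' at $\vote$. For any $i, j \in I_\vote$ we have $c_j \in \vote \subseteq D(W_i)$, hence $\{c_j : j \in I_\vote\} \subseteq D(W_i)$, giving $d_i \ge |I_\vote|$ for every $i \in I_\vote$. Consequently the reciprocal weights $1/d_i$ assign any such index set total weight at most $|I_\vote| \cdot (1/|I_\vote|) = 1$. This motivates taking $\bfq[W_i] := 1/(Z d_i)$ where $Z := \sum_{i=1}^{m} 1/d_i$, together with $u := 1/Z$. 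I would then verify: $\bfq \in \Delta(\calM_k)$ by construction (all entries positive, and $\sum_i \bfq[W_i] = Z^{-1}\sum_i d_i^{-1} = 1$); every \DLP{} constraint holds because $\sum_{i \in I_\vote} \bfq[W_i] \le 1/Z = u$; and the objective equals $u - \frac1k\sum_i d_i\,\bfq[W_i] = \frac1Z - \frac{m}{kZ} = \frac{k-m}{kZ} = \frac{-1}{kZ}$, using $m = k+1$. Since $d_i \ge 1$ forces $Z \le m = k+1$, this is at most $-\frac{1}{k(k+1)}$. For \DrDLP{}, the same point $(\bfq, u)$ gives objective $\frac1Z - \frac{m}{(k+1)Z} = \frac{(k+1)-m}{(k+1)Z} = 0$.

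I do not anticipate a serious obstacle: the entire argument hinges on the single observation that the improving sets $I_\vote$ are exactly the configurations forcing $d_i \ge |I_\vote|$, which is precisely what makes the reciprocal weighting $\bfq[W_i] \propto 1/d_i$ satisfy every constraint simultaneously. The only points deserving a word of care are that deviations are nonempty (so $d_i \ge 1$, hence $Z$ is well-defined and $Z \le m$), that it is legitimate to deduce an upper bound on the optimal value of \DLP{} from one feasible point, and --- if one wishes to remark on it --- that the bound $-\frac{1}{k(k+1)}$ is generally loose, since one actually obtains $-\frac{1}{kZ}$; it is nevertheless exactly what is needed to match the lower bound of \Cref{thm:milp_lowerbound}.
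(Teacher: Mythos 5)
Your proof is correct, but it takes a genuinely different route from the paper's. Both arguments rest on the same characterization (for $m=k+1$, writing $W_i=\alt\setminus\{c_i\}$, the committee $W_i$ is improved upon at a vote $\vote$ iff $c_i\in\vote\subseteq D(W_i)$), and both exhibit an explicit feasible dual point. The paper, however, first splits into cases according to whether some $D(W_i)\subseteq W_i$, and in the remaining case runs a greedy selection: it picks committees $W_{i_1},W_{i_2},\dots$ whose excluded candidate is not yet covered by the union of earlier deviations, puts the uniform distribution on the $T\leq m$ committees selected, and argues that at most one of them improves at any given vote --- a construction deliberately parallel to that of \Cref{thm:singleton}. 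You instead weight \emph{all} $m$ committees by $\bfq[W_i]\propto 1/|D(W_i)|$ and observe that for any vote the improving index set $I_\vote$ satisfies $|D(W_i)|\geq|I_\vote|$ for every $i\in I_\vote$, so the reciprocal weights of the improving committees sum to at most $u=1/Z$. Your version buys a cleaner, case-free argument with no termination bookkeeping, and a slightly sharper bound $-1/(kZ)$ with $Z=\sum_i 1/|D(W_i)|\leq m$; the paper's version buys continuity with the proof technique of \Cref{thm:singleton}, which it is explicitly trying to generalize. The only details worth making fully explicit in a written-up version are the ones you already flag: deviations in $\calM_{\leq k}$ are nonempty so $Z$ is well defined and $Z\leq k+1$, and a single feasible point upper-bounds the minimization \DLP{} (resp.\ \DrDLP{}).
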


    The proof follows from a construction similar to that of \Cref{thm:singleton}. However, unlike the latter, \Cref{thm:m=k+1} does not put any restrictions on the deviation function $D$. As a result, \Cref{thm:dual_eq} gives us core non-emptiness in this setting.\footnote{The concurrent work of \citet{Peters25:Core} shows that PAV satisfies core stability with $m=k+1$ (using Hare quota). Our \Cref{cor:m=k+1core} strengthens this non-emptiness result using the Droop core. As noted by \citet{Casey25:Justified}, improving an axiom satisfiability result from Hare to Droop can often be nontrivial. Still, we note that this result is also obtainable by combining an analogous argument to that of \citeauthor{Peters25:Core} using the observation that PAV still satisfies EJR+ with Droop quota \citep{Jason18:Thresholds,Brill23:Robust}.}
    \begin{cor}\label{cor:m=k+1core}
        The core is always non-empty for any $m$ and $k=m-1$, even using Droop quota.
    \end{cor}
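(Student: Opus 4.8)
The plan is to obtain \Cref{cor:m=k+1core} as an immediate consequence of \Cref{thm:m=k+1} (the duality-side bound) and \Cref{thm:dual_eq} (strong duality), together with the dictionary between optimal program values and core (non-)emptiness supplied by \Cref{prop:maxminmax}, the discussion following \Cref{def:droopcore}, and the linearity reduction of \Cref{lemma:core linear}. Concretely: \Cref{thm:m=k+1} says that when $k = m-1$, \emph{every} deviation function $D \colon \calM_k \to \calM_{\leq k}$ satisfies $\DrDLP{} \leq 0$ (and, in the Hare case, $\DLP{} \leq -\frac{1}{k(k+1)}$). Feeding this into \Cref{thm:dual_eq} with $v = 0$ (resp.\ $v = -\frac{1}{k(k+1)}$) — which is exactly an ``for all $D$'' statement, matching the quantifier provided by \Cref{thm:m=k+1} — yields $\Droop{} \leq 0$ (resp.\ $\MILP{} \leq -\frac{1}{k(k+1)} < 0$).

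It then remains to read off non-emptiness. For the Droop core, recall from the discussion after \Cref{def:droopcore} that some profile admits an empty Droop core precisely when $\Droop{} > 0$; since we have shown $\Droop{} \leq 0$, no such profile exists. By \Cref{lemma:core linear} this statement about vote distributions transfers to profiles with an arbitrary number of voters $n$, so the Droop core is non-empty in all cases. Finally, because $\frac{|W'|}{k+1} < \frac{|W'|}{k}$ for every non-empty deviation $W'$, the condition in \Cref{def:droopcore} is strictly stronger than \Cref{eq:core stability}; hence every Droop core-stable committee is core-stable, and non-emptiness of the Droop core gives the ``even using Droop quota'' strengthening (alternatively, the Hare case follows directly from $\MILP{} < 0$ via the contrapositive of \Cref{prop:maxminmax}).

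Since all of the substantive work is packaged inside \Cref{thm:m=k+1} — whose proof adapts the covering construction of \Cref{thm:singleton}, but now for an \emph{arbitrary} $D$, exploiting that $|W| = m-1$ means a single committee already contains all but one candidate — there is essentially no obstacle left for the corollary itself. The only points that need care are (i) that the non-strict inequality in the Droop core definition is exactly what makes the threshold $v = 0$ (rather than $v < 0$) the correct one for \DrDLP{}/\Droop{}, and (ii) that \Cref{thm:dual_eq} must be invoked over all deviation functions simultaneously. As an alternative route, the observation in the footnote — that PAV satisfies EJR+ under the Droop quota, combined with a covering argument in the $m = k+1$ regime — also yields the claim, but the duality argument above is self-contained given the preceding results.
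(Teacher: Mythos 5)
Your proposal is correct and matches the paper's own (very brief) derivation: the corollary is obtained exactly by combining \Cref{thm:m=k+1} (which bounds \DLP{} and \DrDLP{} for \emph{every} deviation function $D$ when $k=m-1$) with \Cref{thm:dual_eq}, and then reading off non-emptiness via \Cref{prop:maxminmax} and the $\Droop{}\leq 0$ criterion. Your additional care about the non-strict threshold for the Droop case and the implication from Droop core stability to core stability is consistent with the paper's conventions (cf.\ \Cref{cor:prob}).
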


    Overall, \Cref{thm:singleton,thm:dual_eq} demonstrate that our dual formulation offers a novel framework for deriving core non-emptiness results and investigating restricted notions of core stability in terms of deviation powers.

\section{Relationship to Other Axioms}\label{sec:lindahl}

We now modify \MILP{} to investigate the relationship of core stability with other axioms in the literature. Our experiments here resolve previously open problems by efficiently finding counterexamples. They also identify the minimal $m$ and $k$ values for which a counterexample exists in the first place.

\paragraph{Lindahl priceability}
\citet{Munagala22:Auditing} introduce \emph{Lindahl priceability} as another axiom in approval-based multi-winner elections, based on the idea of market clearing. To introduce it, we can think of each voter $i \in N$ as having a budget of $1$. Given a committee $W \in \calM_k$, any voter is able to switch to a strictly preferred alternative set of candidates $T \subseteq \alt$ (not necessarily bounded in size by $k$) if they can ``afford'' all candidates in $T$. The question of whether $W$ is Lindahl priceable becomes whether one can set prices (from voters to candidates) in a way that no candidate is cumulatively overpriced (specifically, larger than $\nicefrac{n}{k}$) and no voter can afford a strictly preferred set of candidates. 

\begin{restatable}{defn}{lindahlpbility}\label{def:lindahl}
    A $k$-committee $W$ is \emph{Lindahl priceable} with respect to profile $\profile$ if $\exists$ a price system 
    $\{\bfp[i,c]\}_{i \in N, c \in \alt} \geq 0$ such that
    \begin{enumerate}
        \item $\forall c \in \alt$: $\sum_{i \in N} \bfp[i,c] \leq \frac{n}{k}$, and
        \item $\forall i \in N, T \subseteq \alt$: $|\vote_i \cap T| > |\vote_i \cap W|  \Rightarrow \underset{c \in T}{\sum} \bfp[i,c] > 1$.
    \end{enumerate}
\end{restatable}

Lindahl priceability implies \emph{weak priceability}, which can be defined via \Cref{def:lindahl}, except we restrict the $T \subseteq \alt$ in condition 2 to be of the form $\{d\}  \cup (\vote_i \cap W)$ s.t. $d \in \vote_i \setminus W$, \ie, each voter can only \emph{add} candidates to their approved ones in $W$.\footnote{The notion that we here call weak priceability was introduced by \citet{Munagala22:Auditing}, who, in that version of their paper, stated it is equivalent to \emph{priceability} introduced by \citet{Peters20:Proportionality}. However, as we show in \Cref{appsec:pbility}, priceability is (1) strictly stronger than weak priceability and (2) incomparable with Lindahl priceability. As a result, our \Cref{thm:no-imply} also proves (Droop) core does not imply priceability.} While it is known that weak priceability is not sufficient for core stability, we have found no prior work with an explicit counterexample showing it is also not necessary. On the other hand, \citet{Munagala22:Auditing} show that Lindahl priceability also implies core stability.
In a later report,
\citet{Munagala24:Core} state ``even though Lindahl priceability implies core stability, we do not know if it is strictly
stronger than the core. We conjecture that these two notions are the same.'' In contrast, we are able to show through our framework that this conjecture is false.\footnote{We suspect this conjecture might have been an oversight on the part of the authors, as checking whether a committee fails Lindahl priceability can be done in polynomial time  \citep{Munagala22:Auditing}, whereas the equivalent problem for core stability is $\NP$-hard \citep[Thm 5.3]{Brill20:Approval}. Hence, the two axioms cannot be equivalent unless $\P=\NP$. In any case, we will see that our minimal counterexample in \Cref{thm:no-imply} disproves the conjecture unconditionally, and also shows Lindahl priceability is strictly stronger than the core \emph{and} weak priceability together (\Cref{cor:linhdahl-no-imply}).} 

\paragraph{Combining it with the core} In order to incorporate weak/Lindahl priceability into \MILP, we first show that both axioms are compatible with our framework of vote distributions, mirroring what \Cref{lemma:core linear} showed for core stability.\begin{restatable}{lemma}{linprice}\label{lemma:pbility-linear} 
    For any given $m,k,n$, and profile $\profile$, a $k$-committee $W$ is weakly (resp.\ Lindahl) priceable iff there exists a price system $\bfp = \{\bfp[\vote,c]\}_{\vote \subseteq \alt, c \in \alt} \geq 0$ such that
    \begin{enumerate}
        \item[1.] $\forall c \in \alt$: $\sum_{\vote \subseteq \alt} \bfx[\vote] \cdot \bfp[\vote,c] \leq \frac{1}{k}$, and
        \item[2.] $\forall \vote \subseteq  \alt, d \in \vote \setminus W, T = (\vote \cap W) \cup \{d\}$ : $\underset{c \in T}{\sum} \bfp[\vote,c] > 1$ \footnotesize(resp.\  $\forall A, T \subseteq \alt$ s.t.~$|\vote \cap T| > |\vote \cap W|$),
    \end{enumerate}
    where $\bfx$, again, is the vote distribution associated with $\profile$. 
\end{restatable}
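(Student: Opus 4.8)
The plan is to exploit that, in \Cref{def:lindahl}, a voter $i$ enters both the budget constraint and the affordability constraint only through its approval set $\vote_i$; hence one may assume without loss of generality that all voters sharing an approval set $\vote$ are charged a common price vector $\bfp[\vote,\cdot]$, at which point dividing the per-candidate budget constraint by $n$ turns $\frac{n}{k}$ into $\frac1k$ and yields exactly condition~1 of the lemma. The proof is then a two-direction translation between per-voter and per-vote-type price systems.

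\textbf{From \Cref{def:lindahl} to the lemma.} Suppose $W$ is weakly (resp.\ Lindahl) priceable with witnessing prices $\{\bfp[i,c]\}$. For each $\vote\subseteq\alt$ occurring in $\profile$, let $n_\vote := |\{i\in N:\vote_i=\vote\}|>0$ and set $\bfp[\vote,c] := \frac{1}{n_\vote}\sum_{i:\vote_i=\vote}\bfp[i,c]$; for $\vote$ not occurring, set $\bfp[\vote,c]:=2$ for all $c$. Condition~1 follows from $\sum_{\vote}\bfx[\vote]\,\bfp[\vote,c]=\frac1n\sum_{i\in N}\bfp[i,c]\le\frac1n\cdot\frac nk=\frac1k$. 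For condition~2, fix $\vote$ and an admissible set $T$ (of the form $(\vote\cap W)\cup\{d\}$ with $d\in\vote\setminus W$ in the weak case; any $T$ with $|\vote\cap T|>|\vote\cap W|$ in the Lindahl case). If $\vote$ occurs in $\profile$, then for each voter $i$ with $\vote_i=\vote$ the pair $(i,T)$ satisfies the premise of \Cref{def:lindahl}'s condition~2 (in the weak case because $d\notin W$ forces $|\vote_i\cap T|=|\vote_i\cap W|+1$), so $\sum_{c\in T}\bfp[i,c]>1$; averaging these strict inequalities, i.e.\ taking a convex combination of reals each exceeding $1$, gives $\sum_{c\in T}\bfp[\vote,c]>1$. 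If $\vote$ does not occur in $\profile$, any admissible $T$ is nonempty, so $\sum_{c\in T}\bfp[\vote,c]\ge 2>1$.

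\textbf{From the lemma to \Cref{def:lindahl}.} Suppose $\{\bfp[\vote,c]\}$ satisfies conditions~1--2, and let each voter inherit its type's prices: $\bfp[i,c]:=\bfp[\vote_i,c]$. Then $\sum_{i\in N}\bfp[i,c]=\sum_{\vote}n_\vote\,\bfp[\vote,c]=n\sum_{\vote}\bfx[\vote]\,\bfp[\vote,c]\le n/k$, which is \Cref{def:lindahl}'s condition~1; and for a voter $i$ and a set $T$ triggering \Cref{def:lindahl}'s condition~2, the required bound $\sum_{c\in T}\bfp[i,c]>1$ is precisely condition~2 of the lemma instantiated at $\vote=\vote_i$ with the same $T$ (using, in the weak case, that the restricted sets $T$ automatically satisfy the strict-improvement hypothesis).

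I do not anticipate a genuine obstacle here: the only points needing care are (i) the convexity observation that an average of numbers each exceeding $1$ still exceeds $1$, which is what licenses normalizing to ``all voters of a type are charged alike,'' and (ii) bookkeeping for approval sets $\vote\subseteq\alt$ that are absent from $\profile$ — these contribute nothing to the budget constraint (as $\bfx[\vote]=0$) and can be assigned a uniformly large price. The weak and Lindahl cases run in parallel, differing only in the family of sets $T$ quantified over in condition~2.
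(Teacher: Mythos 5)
Your proof is correct and follows essentially the same route as the paper's: symmetrize by averaging the per-voter prices over all voters sharing an approval set (using that an average of numbers each exceeding $1$ still exceeds $1$), then divide the budget constraint by $n$ to pass to the vote distribution $\bfx$. Your explicit treatment of approval sets absent from the profile (assigning them a uniformly large price, which is harmless since $\bfx[\vote]=0$ there) is a small bookkeeping point the paper's proof glosses over, but otherwise the two arguments coincide.
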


Fixing $\bfx$ and $W$, this forms a system of linear (strict) inequalities over variables $\bfp$. In \Cref{appsec:dualpbility}, we describe how we leverage linear program duality once again to get a dual program of this system. If the dual 
admits a feasible solution, then it serves as a certificate to $W$ not being weakly/Lindahl priceable for $\bfx$. This allows us to search for a core-stable committee $W^*$ that is not weakly/Lindahl priceable. Namely, for any desired counterexample $W^*$, \wlogg{} $W^* = \{c_1, \dots, c_k\}$, minimize $\mu$ over vote distributions $\bfx$ such that $\mu \geq \max_{W' \in \calM_{\leq k}} \impr_{W^*,W'}^T \bfx - \frac{|W'|}{k}$. To this linear program, add constraints of the weak (resp.\ Lindahl) priceability dual program for $W^*$. This results in a program with quadratic constraints,\footnote{Recall that the weak/Lindahl priceability dual is only linear as long as we do not also optimize over the vote distribution $\bfx$.} whose optimal value is negative iff $W^*$ is core-stable but not weakly (resp.\ Lindahl) priceable. Using this approach, we show that even the stronger axiom of Droop core does not imply either priceability axiom.

\begin{restatable}{thm}{noimply}
\label{thm:no-imply}
    There exists a vote distribution for $m=5$ and $k=3$ (resp.\ $m=4$ and $k=2$) with a $k$-committee that is Droop core-stable but not weakly (resp.\ Lindahl) priceable. 
\end{restatable}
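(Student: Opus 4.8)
The plan is to prove \Cref{thm:no-imply} constructively: exhibit explicit vote distributions $\bfx$ (one for the weak-priceability claim with $m=5$, $k=3$; one for the Lindahl claim with $m=4$, $k=2$) together with a $k$-committee $W^*$ that is Droop core-stable but fails the relevant priceability axiom, and then verify the two halves of each claim. Since the statement is purely existential, no optimization or duality machinery is needed in the \emph{final} write-up — those tools (the quadratically-constrained program sketched just above the theorem) are what one would use to \emph{find} the witnesses, but the proof itself only needs to present them and check two things. So the proof naturally splits into: (i) verifying Droop core-stability of $W^*$, and (ii) verifying non-priceability of $W^*$.

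For step (i), I would use \Cref{def:droopcore} directly in vote-distribution form: enumerate all deviations $W' \in \calM_{\leq k}$ (there are only $\sum_{l=1}^{k}\binom{m}{l}$ of them, a small finite list for these tiny parameters), and for each check $\impr_{W^*,W'}^T\bfx \le \frac{|W'|}{k+1}$. In practice one only needs to argue this for the ``dangerous'' deviations — those $W'$ that some positive-weight votes strictly prefer to $W^*$ — and dismiss the rest trivially (if $W' \subseteq W^*$ or no vote improves, the left side is $0$). The support of $\bfx$ will be small, so $\impr_{W^*,W'}^T\bfx$ is a sum of a handful of the $\bfx[\vote]$ values, and each inequality reduces to a rational arithmetic check. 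I would present this as a short case analysis or a compact table indexed by $|W'|$.

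For step (ii), the cleaner route is to produce an explicit \emph{dual} certificate of non-priceability rather than to argue ``no price system exists'' by hand. Concretely, via \Cref{lemma:pbility-linear} the existence of a weak (resp.\ Lindahl) price system for $W^*$ is a feasibility question for a finite system of linear (in)equalities in the variables $\bfp[\vote,c]$; by LP duality / a Farkas-type argument (the construction is deferred to the appendix in this paper, so I can just invoke it), infeasibility is witnessed by nonnegative multipliers on the constraints whose weighted combination yields $0 \ge (\text{something strictly positive})$. I would simply state those multipliers and check the resulting inequality. Alternatively — and perhaps more readable — I would give a direct contradiction: assume a price system $\bfp$ satisfying conditions 1–2 of \Cref{lemma:pbility-linear} exists, sum condition-2 inequalities over the relevant $(\vote,d)$ pairs weighted by $\bfx[\vote]$, rearrange into a sum over candidates $c$ of $\sum_\vote \bfx[\vote]\bfp[\vote,c]$ (restricted to the $\vote$'s that use $c$ in their deviation set $T$), and bound each such term by $\frac{1}{k}$ using condition 1, obtaining a numerical contradiction. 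The arithmetic is designed (by the search) to be tight, so this will come down to comparing two explicit rationals.

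The main obstacle, and the part that took real work in the paper (hence the heavy machinery preceding the theorem), is \emph{finding} the witnesses $(\bfx, W^*)$ with the minimal $m,k$ — ensuring simultaneously that $W^*$ clears the entire Droop-core gauntlet while a dual infeasibility certificate for priceability exists. In the written proof this is hidden: I would just assert the witnesses (e.g.\ a distribution supported on three or four approval sets over $\{c_1,\dots,c_5\}$ with $W^*=\{c_1,c_2,c_3\}$ for the first part, and an analogous one over $\{c_1,\dots,c_4\}$ with $W^*=\{c_1,c_2\}$ for the second) and verify them. A secondary subtlety is the strict-vs-weak inequality bookkeeping: Droop core uses ``$\le 0$'' (\Cref{def:droopcore}) while priceability condition 2 is a strict ``$> 1$'', and I must make sure the non-priceability argument only uses the strict inequalities in a way that survives the weighted sum — i.e.\ that the contradiction is obtained with a strict gap, not a boundary case. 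Finally I would note, as the paper does via \Cref{cor:linhdahl-no-imply} and the footnotes, that since priceability is strictly stronger than weak priceability, the first witness simultaneously shows Droop core does not imply priceability, and record that the $m,k$ values are minimal as confirmed by the experiments in \Cref{sec:experiments}.
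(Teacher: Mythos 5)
Your plan matches the paper's proof exactly: the paper exhibits an explicit vote distribution and committee for each claim (supported on four approval sets with $W=\{c_1,c_2,c_3\}$ for $m=5,k=3$, and on three approval sets with $W=\{c_1,c_2\}$ for $m=4,k=2$), asserts Droop core-stability, and refutes priceability by assuming a price system, writing out conditions 1--2 of \Cref{lemma:pbility-linear}, and combining them with explicit nonnegative multipliers to reach $0<0$ --- precisely your ``direct contradiction'' variant of the dual certificate. The only thing missing from your write-up is the actual witnesses, which you correctly identify as the substantive (search-generated) content; everything else, including the strictness bookkeeping, is handled as the paper does.
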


Another strength of our program is that it enables us to confirm the \emph{minimality} of these counterexamples, in the sense that it rules out counterexamples for any $(m',k')$ with $m' < m$ or $[m' = m, k' < k]$. In contrast, hand-designed counterexamples for similar problems can get quite large---such as that no welfarist rule (a class of voting rules including PAV) is priceable for $k=57$ and $m=669$ \citep{Peters20:Proportionality}---without revealing whether simpler ones exist. Further, leveraging minimality, we derive that Lindahl priceability is strictly stronger than both axioms it implies.

\begin{restatable}{cor}{corepnotlp}\label{cor:linhdahl-no-imply}
    There exists a vote distribution for $m=4$ and $k=2$ that admits a $k$-committee that is (Droop) core-stable and weakly priceable, but not Lindahl priceable.
\end{restatable}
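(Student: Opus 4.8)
The plan is to obtain \Cref{cor:linhdahl-no-imply} as an immediate consequence of \Cref{thm:no-imply} together with the minimality guarantee that the theorem's proof provides. Concretely, \Cref{thm:no-imply} already hands us, for $m=4$ and $k=2$, a vote distribution $\bfx$ and a $2$-committee $W^*$ that is Droop core-stable (hence also core-stable) but not Lindahl priceable. The only additional fact to establish is that this same $W^*$ is weakly priceable; everything else in the corollary then follows by combining the three properties.

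For that missing piece I would invoke the minimality part of \Cref{thm:no-imply} in the weak-priceability direction. The minimal $(m,k)$ for which a Droop core-stable committee can fail weak priceability is $(5,3)$, and $(4,2)$ is strictly below this threshold since $4<5$. Therefore no vote distribution on $m=4$ candidates admits a Droop core-stable $2$-committee that is not weakly priceable, and in particular the committee $W^*$ above, being Droop core-stable, must be weakly priceable. This yields exactly the statement: for $m=4$, $k=2$ there is a vote distribution admitting a $2$-committee that is (Droop) core-stable and weakly priceable, yet not Lindahl priceable. As a remark, I would point out that this simultaneously certifies the claimed strict separation: Lindahl priceability implies both core stability \citep{Munagala22:Auditing} and weak priceability (the latter being immediate from \Cref{def:lindahl}), but $W^*$ witnesses that both of these can hold without Lindahl priceability holding.

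An alternative, fully self-contained route would be to take the explicit $\bfx$ and $W^*$ produced in the proof of \Cref{thm:no-imply} and directly exhibit a price system certifying weak priceability via the vote-distribution characterization of \Cref{lemma:pbility-linear}. This sidesteps any appeal to minimality, at the cost of a short ad hoc verification of conditions 1 and 2 of that lemma for the particular profile. I would present the minimality-based argument as the main proof and mention the explicit-verification route only as a sanity check.

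I do not anticipate a genuine obstacle here beyond bookkeeping. The one point requiring care is consistency of the notions invoked: the minimality statement must be applied in the Droop core version (which is precisely the form in which \Cref{thm:no-imply} is stated, so there is no mismatch), and ``weakly priceable'' throughout should be read as the vote-distribution formulation of \Cref{lemma:pbility-linear}, which is equivalent to \Cref{def:lindahl}'s restricted version, so no loss is incurred when passing between the two. The substantive content---the computational search and the minimality certification---is already encapsulated in \Cref{thm:no-imply}, so the corollary is genuinely a one-paragraph deduction.
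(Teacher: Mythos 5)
Your argument is correct, and it is in fact the argument the paper's main text gestures at (``leveraging minimality, we derive\ldots''), but it is not the proof the paper actually writes down: the appendix takes what you call the ``alternative, fully self-contained route.'' Concretely, the paper reuses the $m=4$, $k=2$ counterexample from \Cref{thm:no-imply} --- the distribution $\bfx$ placing mass $1/3$ on each of $\{c_3,c_4\}$, $\{c_1,c_3,c_4\}$, $\{c_1,c_2,c_3,c_4\}$, with $W=\{c_1,c_2\}$ Droop core-stable but not Lindahl priceable --- and then exhibits an explicit price system certifying weak priceability via \Cref{lemma:pbility-linear} (price $3/2$ on the pairs $(\{c_3,c_4\},c_3)$, $(\{c_3,c_4\},c_4)$, $(\{c_1,c_3,c_4\},c_1)$, $(\{c_1,c_2,c_3,c_4\},c_2)$, and $0$ on the other supported pairs). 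The trade-off between the two routes is worth noting. Your minimality-based deduction is logically valid: minimality of the $(5,3)$ weak-priceability counterexample rules out any $(4,2)$ instance with a Droop core-stable but not weakly priceable committee, so the $W$ above must be weakly priceable. However, that minimality claim is itself only a computational assertion (output of the modified program) for which the paper supplies no human-checkable certificate, so your main proof inherits a dependence on trusting the solver; the explicit price system makes the corollary verifiable by hand, which is presumably why the authors chose it. I would therefore invert your emphasis: present the explicit verification as the proof and offer the minimality argument as corroboration. Your side remarks are fine --- Droop core-stability does imply (Hare) core-stability since $\frac{|W'|}{k+1} < \frac{|W'|}{k}$, and Lindahl priceability implies weak priceability because the latter only restricts the sets $T$ quantified over in condition 2 of \Cref{def:lindahl}.
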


We end this section with remarking that while we obtain the counterexamples in \Cref{thm:no-imply} via \Gurobi{} experiments, we can construct simple human-readable proofs uisng the optimal variable assignments in the dual weak/Lindahl priceability program. We demonstrate this with \Cref{ex:simple_profile}, whose vote distribution in \eqref{eq:ex-profile} we obtained through one of our programs (showing core stability does not imply weak priceability for $m=5$ and $k=3$). Here, $W=\{c_1,c_2,c_3\}$ is core-stable. 
For the sake of contradiction, assume $W$ is also weakly priceable, certified by a price system $\bfp =$ $\{\bfp[\vote,c]\}_{\vote \subseteq \alt, c \in \alt}$. Condition 1 of \Cref{lemma:pbility-linear} then implies: 
\begin{center}
    $\displaystyle \frac{\bfp[\{c_2,c_4\}, c_2]}{3} + \frac{\bfp[\{c_2,c_5\}, c_2]}{3} \leq  \sum_{\vote \subseteq \alt} \bfx[\vote]  \bfp[\vote,c_2] \leq \frac{1}{3}$
    \\
    $\displaystyle \frac{\bfp[\{c_2,c_4\}, c_4]}{3} + \frac{\bfp[\{c_4,c_5\}, c_4]}{6} \leq  \sum_{\vote \subseteq \alt} \bfx[\vote]  \bfp[\vote,c_4] \leq \frac{1}{3}$
    \\
    $\displaystyle \frac{\bfp[\{c_2,c_5\}, c_5]}{3} + \frac{\bfp[\{c_4,c_5\}, c_5]}{6} \leq  \sum_{\vote  \subseteq \alt} \bfx[\vote]  \bfp[\vote,c_5] \leq \frac{1}{3}$
\end{center}    
Similarly, condition 2 of \Cref{lemma:pbility-linear} yields
\begin{align}
    &\bfp[\{c_2,c_4\}, c_2] + \bfp[\{c_2,c_4\}, c_4] >1 \label{eq:1/3-1}\\
    &\bfp[\{c_2,c_5\}, c_2] + \bfp[\{c_2,c_5\}, c_5] >1  \label{eq:1/3-2}  \\
    &\bfp[\{c_4,c_5\}, c_4] >1 \quad \text{and} \quad \bfp[\{c_4,c_5\}, c_5]  >1 \label{eq:1/6}.
\end{align}
If we multiply \eqref{eq:1/3-1}-\eqref{eq:1/3-2} by $-1/3$, multiply \eqref{eq:1/6}
by $-1/6$, and add them up together with the previous three inequalities from Condition 1, we get the contradiction $0<0$.

\section{Conclusion \& Future Directions}
\label{sec:disc}

We used modifications to \MILP{} to explore core's logical relationship with other axioms, proving minimal \emph{incomparability} results. A natural next step is to use our program to search for \emph{incompatibility}: Does there exist a profile for which no core-stable committee satisfies (weak) priceability? This is currently an open problem. Similarly, to the best of our knowledge, whether the core is compatible with other axioms such as EJR+ and committee monotonicity is not known. These axioms can be incorporated into \MILP, searching for counterexamples of simultaneous satisfiability. Another direction is to extend our tools to settings different from (but related to) approval elections, such as ``thumbs up and down voting'', where each voter has three options per candidate: approval, neutral, and disapproval \citep{Kraiczy25:Proportionality}. Overall, we hope our framework will open up new directions for using MILPs in social choice settings.

\section*{Acknowledgements}
We thank Chase Norman, Ulle Endriss, Paul Gölz, Dominik Peters, Kangning Wang, Jan Vondr{\'a}k, Edith Elkind, and Markus Brill for helpful discussions and feedback at various stages of this project. R.E.B., E.T., and V.C.\ thank the Cooperative AI Foundation, Macroscopic Ventures (formerly Polaris Ventures / the Center for Emerging Risk Research), and Jaan Tallinn's donor-advised fund at Founders Pledge for financial support. R.E.B.\ and E.T.\ are also supported by the Cooperative AI PhD Fellowship. M.H.\ is supported by the National Science Foundation (NSF) under grant CCF-2415773. L.X.\ acknowledges NSF 2450124, 2517733, and 2518373 for support.

\bibliography{refs}

@String{AAMAS = "Proceedings of the International Conference on Autonomous Agents and Multiagent Systems"}

@String{GEB =     "Games and Economic Behavior"}

@String{SODA = "Proceedings of the Annual ACM-SIAM Symposium on Discrete Algorithms"}

@String{IJCAI = "Proceedings of the International Joint Conference on Artificial Intelligence (IJCAI)"}

@String{AAAI = "Proceedings of the AAAI Conference on Artificial Intelligence (AAAI)"}

@String{EC = "Proceedings of the ACM Conference on Economics and Computation (EC)"}

@String{WINE = "Proceedings of the Conference on Web and Internet Economics (WINE)"}

@String{NeurIPS = "Proceedings of the Annual Conference on Neural Information
                  Processing Systems (NeurIPS)"}

@String{ICML = "Proceedings of the International Conference on Machine Learning (ICML)"}

@String{UAI = "Proceedings of the Uncertainty in Artificial Intelligence Conference (UAI)"}

@String{STOC = "Proceedings of the Annual Symposium on Theory of Computing (STOC)"}

@String{ACMEC = "Proceedings of the ACM Conference
                  on Electronic Commerce (EC)"}

@inproceedings{Brill23:Robust,
    title={Robust and Verifiable Proportionality Axioms for Multiwinner Voting}, 
    author={Markus Brill and Jannik Peters},
    year={2023},
    booktitle=EC
}

@article{Xia24:Linear,
    title={A Linear Theory of Multi-Winner Voting}, 
    author={Lirong Xia},
    year={2025},
    journal = {arXiv preprint arXiv:2503.03082}
}

@article{Aziz17:Justified,
    author="Haris Aziz and Markus Brill and Vincent Conitzer and Edith Elkind and Rupert Freeman and Toby Walsh",
    title="Justified Representation in Approval-Based Committee Voting",
    journal="Social Choice and Welfare",
    year=2017
}

@misc{gurobi,
    author = {{Gurobi Optimization, LLC}},
    title = {{Gurobi Optimizer Reference Manual}},
    year = 2024,
    url = "https://www.gurobi.com"
}

@InProceedings{Conitzer02:Mechanism,
    author =       "Vincent Conitzer and Tuomas Sandholm",
    title =        "Complexity of Mechanism Design",
    booktitle =    UAI,
    year =         "2002"
}

@InProceedings{Conitzer04:Self,
    author =       "Vincent Conitzer and Tuomas Sandholm",
    title =        "Self-interested Automated Mechanism Design and Implications for
                  Optimal Combinatorial Auctions",
    booktitle =    ACMEC,
    year =         "2004"
}

@Article{Guo07:Worst,
    author =       "Mingyu Guo and Vincent Conitzer",
    title =        "Worst-Case Optimal Redistribution of {VCG} Payments in Multi-Unit Auctions",
    journal = GEB,
    year="2009"
}

@inproceedings{Guo10:Computationally,
    author =       {Mingyu Guo and Vincent Conitzer},
    title =        "Computationally Feasible Automated Mechanism Design: General Approach and Case Studies",
    year =         "2010",
    booktitle= AAAI
}

@inproceedings{Jiang20:Approximately,
    title={Approximately Stable Committee Selection}, 
    author={Zhihao Jiang and Kamesh Munagala and Kangning Wang},
    year={2020},
    booktitle= STOC
}

@inproceedings{Cheng19:Group, 
    title={Group Fairness in Committee Selection},
    booktitle= EC,
    author={Cheng, Yu and Jiang, Zhihao and Munagala, Kamesh and Wang, Kangning},
    year={2019}
}

@article{Jason18:Thresholds,
    title={Thresholds quantifying proportionality criteria for election methods}, 
    author={Svante Janson},
    year={2018},
    journal = {arXiv preprint arXiv:1810.06377}
}

@inproceedings{Brill20:Approval,
    title={Approval-Based Apportionment},
    booktitle= AAAI,
    author={Brill, Markus and Gölz, Paul and Peters, Dominik and Schmidt-Kraepelin, Ulrike and Wilker, Kai},
    year={2020}
}

@inproceedings{Peters25:Core,
    title={The Core of Approval-Based Committee Elections with Few Seats}, 
    author={Dominik Peters},
    year={2025},
    booktitle= IJCAI
}

@inproceedings{Munagala22:Auditing,
    author = {Munagala, Kamesh and Shen, Yiheng and Wang, Kangning},
    title = {Auditing for Core Stability in Participatory Budgeting},
    year = {2022},
    booktitle = WINE
}

@inproceedings{Munagala24:Core,
    author = {Munagala, Kamesh and Shen, Yiheng},
    title = {Core Stability in Participatory Budgeting: Approximations
    and Open Questions},
    year = {2024},
    booktitle = "New Directions in Social Choice at EC 2024"
}

@inproceedings{Peters21:Market,
    author       = {Dominik Peters and
                  Grzegorz Pierczynski and
                  Nisarg Shah and
                  Piotr Skowron},
    title        = {Market-Based Explanations of Collective Decisions},
    booktitle    = AAAI,
    year         = {2021}
}

@article{Geist11:Automated,
    author       = {Christian Geist and
                  Ulrich Endriss},
    title        = {Automated Search for Impossibility Theorems in Social Choice Theory:
                  Ranking Sets of Objects},
    journal      = {Journal of Artificial Intelligence Research},
    year         = {2011}
}

@incollection{Geist17:Computer,
    author    = {Geist, Christian and Peters, Dominik},
    title     = {Computer-Aided Methods for Social Choice Theory},
    booktitle = {Trends in Computational Social Choice},
    editor    = {Endriss, Ulle},
    publisher = {AI Access},
    year      = {2017}
}

@inproceedings{Brandl21:Distribution,
    author       = {Florian Brandl and
                  Felix Brandt and
                  Dominik Peters and
                  Christian Stricker},
    title        = {Distribution Rules Under Dichotomous Preferences: Two Out of Three
                  Ain't Bad},
    booktitle    = EC,
    year         = {2021}
}

@article{Brandt17:Optimal,
    author       = {Felix Brandt and
                  Christian Geist and
                  Dominik Peters},
    title        = {Optimal bounds for the no-show paradox via {SAT} solving},
    journal      = {Mathematical Social Sciences},
    year         = {2017}
}

@article{Muolin88:Condorcet,
    title = {Condorcet's principle implies the no show paradox},
    journal = {Journal of Economic Theory},
    year = {1988},
    author = {Hervé Moulin}
}

@inproceedings{Endriss20:Analysis,
    author       = {Ulle Endriss},
    title        = {Analysis of One-to-One Matching Mechanisms via {SAT} Solving: Impossibilities
                  for Universal Axioms},
    booktitle    = AAAI,
    year         = {2020}
}

@inproceedings{Peters18:Proportionality,
    author       = {Dominik Peters},
    title        = {Proportionality and Strategyproofness in Multiwinner Elections},
    booktitle    = AAMAS,
    year         = {2018}
}

@article{Brandt16:Finding,
    author       = {Felix Brandt and
                  Christian Geist},
    title        = {Finding Strategyproof Social Choice Functions via {SAT} Solving},
    journal      = {Journal of Artificial Intelligence Research},
    year         = {2016}
}

@article{Brandl18:Proving,
    author       = {Florian Brandl and
                  Felix Brandt and
                  Manuel Eberl and
                  Christian Geist},
    title        = {Proving the Incompatibility of Efficiency and Strategyproofness via
                  {SMT} Solving},
    journal      = {The Journal of the ACM},
    year         = {2018}
}

@book{Biere21:Handbook,
    editor       = {Armin Biere and
                  Marijn Heule and
                  Hans van Maaren and
                  Toby Walsh},
    title        = {Handbook of Satisfiability - Second Edition},
    series       = {Frontiers in Artificial Intelligence and Applications},
    volume       = {336},
    publisher    = {{IOS} Press},
    year         = {2021}
}

@article{Tang09:Computer,
    title = {Computer-aided proofs of Arrow's and other impossibility theorems},
    journal = {Artificial Intelligence},
    year = {2009},
    author = {Pingzhong Tang and Fangzhen Lin}
}

@techreport{Tang08:A,
  author      = {Tang, Pingzhong and Lin, Fangzhen},
  title       = {A Computer-Aided Proof to Gibbard-Satterthwaite Theorem},
  institution = {Department of Computer Science, Hong Kong University of Science and Technology},
  year        = {2008}
}

@article{Gibbard73:Manipulation,
    author = {Allan Gibbard},
    title = {Manipulation of Voting Schemes: A General Result},
    journal = {Econometrica},
    year = {1973}
}

@article{vonStengel97:Team,
    title = {Team-Maxmin Equilibria},
    journal = {Games and Economic Behavior},
    year = {1997},
    author = {Bernhard {von Stengel} and Daphne Koller},
}

@inproceedings{Basilico17:Team,
  author       = {Nicola Basilico and
                  Andrea Celli and
                  Giuseppe De Nittis and
                  Nicola Gatti},
  title        = {Team-Maxmin Equilibrium: Efficiency Bounds and Algorithms},
  booktitle    = AAAI,
  year         = {2017}
}

@article{Satterthwaite75:Strategy,
    title = {Strategy-proofness and Arrow's conditions: Existence and correspondence theorems for voting procedures and social welfare functions},
    author = {Satterthwaite, Mark Allen},
    year = {1975},
    journal = {Journal of Economic Theory}
}

@Book{Arrow63:Social,
    title     = {Social choice and individual values},
    publisher = {New Haven: Cowles Foundation},
    year      = {1963},
    author    = {Kenneth Arrow},
    edition   = {2\textsuperscript{nd}},
    note      = {1\textsuperscript{st} edition 1951},
}

@article{Brandl19:Strategic,
    author       = {Florian Brandl and
                  Felix Brandt and
                  Christian Geist and
                  Johannes Hofbauer},
    title        = {Strategic Abstention based on Preference Extensions: Positive Results
                  and Computer-Generated Impossibilities},
    journal      = {Journal of Artificial Intelligence Research},
    year         = {2019}
}

@inproceedings{Mennle16:Pareto,
    author = {Mennle, Timo and Seuken, Sven},
    title = {The Pareto Frontier for Random Mechanisms},
    year = {2016},
    booktitle = EC
}

@book{Lackner22:Multi,
    title={Multi-Winner Voting with Approval Preferences},
    author={Lackner, M. and Skowron, P.},
    series={SpringerBriefs in Intelligent Systems},
    year={2022},
    publisher={Springer International Publishing}
}

@inproceedings{Aziz18:Complexity,
    author = {Aziz, Haris and Elkind, Edith and Huang, Shenwei and Lackner, Martin and Sanchez-Fernandez, Luis and Skowron, Piotr},
    booktitle =  AAAI,
    title = {On the Complexity of Extended and Proportional Justified Representation},
    year = {2018}
}

@inproceedings{Peters20:Proportionality,
    title={Proportionality and the Limits of Welfarism}, 
    booktitle=EC,
    author={Dominik Peters and Piotr Skowron},
    year={2020}
}

@inproceedings{Peters21:Proportional,
    title={Proportional Participatory Budgeting with Additive Utilities}, 
    author={Dominik Peters and Grzegorz Pierczyński and Piotr Skowron},
    booktitle=NeurIPS,
    year={2021}
}

@inproceedings{Chaudhury24:Fair,
    author       = {Bhaskar Ray Chaudhury and
                  Aniket Murhekar and
                  Zhuowen Yuan and
                  Bo Li and
                  Ruta Mehta and
                  Ariel D. Procaccia},
    title        = {Fair Federated Learning via the Proportional Veto Core},
    booktitle    = ICML,
    year         = {2024}
}

@inproceedings{Chaudhury22:Fairness,
    author       = {Bhaskar Ray Chaudhury and
                  Linyi Li and
                  Mintong Kang and
                  Bo Li and
                  Ruta Mehta},
    title        = {Fairness in Federated Learning via Core-Stability},
    booktitle    = NeurIPS,
    year         = {2022}
}

@article{Shapley66:Quasi,
    author = {L. S. Shapley and M. Shubik},
    journal = {Econometrica},
    publisher = {[Wiley, Econometric Society]},
    title = {Quasi-Cores in a Monetary Economy with Nonconvex Preferences},
    year = {1966}
}

@article{Maschler79:Geometric,
    author = {M. Maschler and B. Peleg and L. S. Shapley},
    journal = {Mathematics of Operations Research},
    publisher = {INFORMS},
    title = {Geometric Properties of the Kernel, Nucleolus, and Related Solution Concepts},
    year = {1979}
}

@inproceedings{Yang24:Federated,
    title={Federated Multi-Objective Learning}, 
    author={Haibo Yang and Zhuqing Liu and Jia Liu and Chaosheng Dong and Michinari Momma},
    year={2024},
    booktitle=NeurIPS
}

@inproceedings{Conitzer24:Social,
    title={Social Choice Should Guide AI Alignment in Dealing with Diverse Human Feedback}, 
    author={Vincent Conitzer and Rachel Freedman and Jobst Heitzig and Wesley H. Holliday and Bob M. Jacobs and Nathan Lambert and Milan Mossé and Eric Pacuit and Stuart Russell and Hailey Schoelkopf and Emanuel Tewolde and William S. Zwicker},
    booktitle= ICML,
    year={2024}
}

@article{Kraiczy25:Proportionality,
    title={Proportionality in Thumbs Up and Down Voting}, 
    author={Sonja Kraiczy and Georgios Papasotiropoulos and Grzegorz Pierczyński and Piotr Skowron},
    year={2025},
    journal = {arXiv preprint arXiv:2503.01985}
}

@inproceedings{Damle24:Designing,
    author = {Damle, Sankarshan and Padala, Manisha and Gujar, Sujit},
    title = {Designing Redistribution Mechanisms for Reducing Transaction Fees in Blockchains},
    year = {2024},
    booktitle = AAMAS
}

@inproceedings{Casey25:Justified,
    author = {Casey, Matthew M. and Elkind, Edith},
    title = {Justified Representation: From Hare to Droop},
    year = {2025},
    booktitle= WINE
}

@inproceedings{Li15:Cooperative,
    author       = {Yuqian Li and
                  Vincent Conitzer},
    title        = {Cooperative Game Solution Concepts that Maximize Stability under Noise},
    booktitle    = AAAI,
    year         = {2015}
}

@inproceedings{Munagala21:Approximate,
    title={Approximate Core for Committee Selection via Multilinear Extension and Market Clearing}, 
    author={Kamesh Munagala and Yiheng Shen and Kangning Wang and Zhiyi Wang},
    year={2021},
    booktitle=SODA
}

@inproceedings{Berker25:Value,
    title = "The Value of Recall in Extensive-Form Games",
    author = "Ratip Emin Berker and Emanuel Tewolde and Ioannis Anagnostides and Tuomas Sandholm and Vincent Conitzer",
    year = "2025",
    booktitle = AAAI
}

\appendix
\onecolumn 

In this appendix, we first provide some context on the setup of our experiments, and then prove the theoretical statement made in each section of the main body.

\section{Experimental Details}

Our well-documented experiment code can be found in the supplementary material. All of our experiments are conducted with the commercial solver \Gurobi{}; thus, a valid Gurobi license is required in order to run the experiments. Moreover, since \Gurobi{} is a global optimizer, we (1) do not have to choose a metric, (2) can only conduct a single experiment run per instance (parameterized by $m$ and $k$), and summaries of results (beyond average/median) are not applicable, and (3) statistical tests for judgment of improvements are not applicable either. No hyperparameter finetuning is needed.

All experiments but one were conducted on a MacBook Pro 16-inch (2021) equipped with an Apple M1 Pro processor (10-core CPU with 8 performance and 2 efficiency cores, 16-core GPU) with 16GB unified memory. The \MILP{} experiment for $m=8, k=4$, on the other hand, was conducted on a compute node equipped with dual AMD EPYC 7742 processors (64 cores each, 128 cores total, 2.25-3.40 GHz) and 256GB RAM.

\section{On \Cref{sec:milp}}

\subsection{Proof of \Cref{prop:maxminmax}}
\maxminmax*
\begin{proof}
    By \Cref{lemma:core linear}, the question of whether the core of a given profile $\profile$ is empty, that is, whether $\profile$ does not admit a core-stable committee $W$, is equivalent to the truth value of
    \[
        \forall W \in \calM_k \, \exists W' \in \calM_{\leq k} : \, \impr_{W,W'}^T \bfx - \frac{|W'|}{k} \geq 0
    \]
    for $\profile$'s corresponding vote distribution $\bfx$. This itself is equivalent to whether
    \[
        \min_{W \in \calM_k} \max_{W' \in \calM_{\leq k}} \impr_{W,W'}^T \bfx - \frac{|W'|}{k}
    \]
    has a nonnegative value. Therefore, deciding whether there exists a $N$ and $\profile = ( \vote_i )_{i \in N}$ with an empty core is equivalent to deciding whether \MMM{} $\geq 0$.
\end{proof}
\subsection{Proof of \Cref{thm:milp_equivalence}}
\milp*
\begin{proof}
    First, we observe generally that for any feasible $\bfx$ and any $W,W',$ we have
    \begin{align}
        \label{eq:bound}
        \impr_{W,W'}^T \bfx - \frac{|W'|}{k} \in [0 - 1, 1 - 0] = [-1,1] \, .
    \end{align}
    (\MMM{} $\leq$ \MILP{}): Let $\bfx$ be an optimal point of \MMM{}. We show that it forms a feasible point $(\bfx, \mu, \bfy)$ for \MILP{} defined as follows. Set $\mu := \min_{W \in \calM_k} \max_{W' \in \calM_{\leq k}} \impr_{W,W'}^T \bfx - \frac{|W'|}{k}$, and for $W \in \calM_k$, set $\bfy[W,W']$ to $1$ for any $W' \in \argmax_{W'' \in \calM_{\leq k}} \impr_{W,W''}^T \bfx - \frac{|W''|}{k}$, and $0$ otherwise. Then, the first two \MILP{} constraints are satisfied. The last constraint is vacuously satisfied for any $\bfy[W,W']=0$, because $\mu - \Big(\impr_{W,W'}^T \bfx - \frac{|W'|}{k} \Big) \leq 2$ by (\ref{eq:bound}). If $\bfy[W,W']=1$, on the other hand, the last constraint is still satisfied by the choice of $\mu$ and $\bfy$.

    \noindent(\MMM{} $\geq$ \MILP{}): Let $(\bfx, \mu, \bfy)$ be a feasible variable assignment in \MILP{}. We can assume without loss of optimality that the components of $\bfx$ are rational, since all constraints of \MILP{} use rational coefficients. We will show $\mu \leq \MMM{}$, which implies $\MILP{} \leq \MMM{}$, as $\MILP{}$ has $\mu$ as its objective value. By the constraint on $\bfy$, for each $W \in \calM_k$ there must exist some $W'_W \in \calM_{\leq k}$ such that $\bfy[W,W'_W]=1$. By the last constraint of \MILP{}, this implies $\mu \leq \impr_{W,W'_W}^T \bfx - \frac{|W'_W|}{k} + 3 \cdot 0 \leq  \max_{W'\in \calM_{\leq k}} \impr_{W,W'}^T \bfx - \frac{|W'|}{k}$. Since this is true for each $W \in \calM_k$, this implies 
    $$\mu \leq \min_{W \in \calM_k}  \max_{W'\in \calM_{\leq k}} \impr_{W,W'}^T \bfx  - \frac{|W'|}{k} \leq \max_{\bfx' \in \Delta(2^{\alt}) \cap \Q^{2^C}} \min_{W \in \calM_k} \max_{W' \in \calM_{\leq k}} \impr_{W,W'}^T \bfx' - \frac{|W'|}{k} = \MMM{}.$$
\end{proof}

\section{On \Cref{sec:experiments}}
\subsection{Proof of \Cref{cor:droop-core}}
\droop* 

\begin{proof}

Recall that \Droop{} is identical to \MILP{} except the last constraint is now $\forall W \in \calM_k, W' \in \calM_{\leq k}:$
\begin{align*}
\mu \leq \impr_{W,W'}^T \bfx - \frac{|W'|}{k+1} + 3(1-\bfy[W,W']).
\end{align*}
Consider the same variable assignment as in the proof of \Cref{thm:milp_lowerbound}, except $\mu=0$. Once again, it is straightforward to check that this assignment is feasible for \Droop{}. In particular, for any $W$ and for $W'=\{c_W\}$, we have 
    \begin{center}
        $\impr_{W,W'}^T \bfx - \frac{|W'|}{k+1} + 3(1-\bfy[W,W']) =  \bfx[\{c_{W}\}] - \frac{1}{k+1}$
        \\ 
        $= \frac{1}{k+1} -\frac{1}{k+1}=0$,
    \end{center}
    indicating that for these values of $\bfx$ and $\bfy$, the $\mu$ is chosen maximal while staying feasible.
\end{proof}

\section{On \Cref{sec:dual}}
\subsection{Proof of \Cref{thm:dual_eq}}
\dual*
 \begin{proof} We will prove the statement for \MILP, the proof for \Droop{} follows identically. For convenience, we first restate \MILP. 
\begin{align}
 &\max_{\bfx \in \R^{2^\alt}, \, \mu \in \R, \, \bfy \in \{0,1\}^{\calM_k \times \calM_{\leq k}}} \quad  \mu \quad  \tag{\MILP}
 \\
 &\textnormal{s.t.} \, \, \sum_{\vote \subseteq \alt} \bfx[\vote] = 1 \quad \text{and} \quad \forall \vote \subseteq \alt : \bfx[\vote] \geq 0 \nonumber
\\
&\quad \, \forall W \in \calM_k : \sum_{W' \in \calM_{\leq k}} \bfy[W,W'] \geq 1 \nonumber
\\
&\quad \, \forall W \in \calM_k, W' \in \calM_{\leq k} :  \nonumber
\\
&\quad \quad \quad \mu \leq \impr_{W,W'}^T \bfx - \frac{|W'|}{k} + 3(1-\bfy[W,W']) \nonumber
\end{align}
 
    For any fixed assignment to $\bfy$, \MILP{} becomes a linear program over variables $\bfx$ and $\mu$, the optimal value of which is at most that of \MILP{}; call this linear program $\LPy{\bfy}$. Define $Y=\{\bfy \in \{0,1\}^{\calM_k \times \calM_{\leq k}}: \sum_{W' \in \calM_{\leq k}} \bfy[W,W']=1 ~\forall W\in \calM_k\}$. Say $(\bfx^*, \mu^*, \bfy^*)$ is a feasible variable assignment that maximizes \MILP. Without loss of optimality, we can assume $\bfy^* \in Y$. This is because for any $\bfy$ such that $\bfy[W,W'_1]=\bfy[W,W'_2]=1$ for some $W \in \calM_k$ and two distinct $W'_1,W'_2 \in \calM_{\leq k}$, setting $\bfy[W,W'_2]=0$ will only increase the right hand side (i.e., the upper bound) of some $\leq$ constraints of \MILP, hence not hurting feasibility. Therefore, using optimality of $(\bfx^*, \mu^*, \bfy^*)$ with $\bfy^* \in Y$, we get $\MILP = \mu^* = \LPy{\bfy^*} \leq \max_{\bfy \in Y} \LPy{\bfy} \leq \MILP$, which yields the equality
    \begin{align}
        \MILP = \max_{\bfy \in Y} \LPy{\bfy}.\label{eq:duality}
    \end{align}
    Hence, we can upper bound \MILP{} by upper bounding $\LPy{\bfy}$ for all $\bfy \in Y$. To do so, for each $\bfy \in Y$, define the map $D_\bfy: \calM_{k} \rightarrow \calM_{\leq k}$ mapping each $W \in \calM_k$ to the unique $W' \in \calM_{\leq k}$ for which $\bfy[W,W']=1$. Then taking the dual of of $\LPy{\bfy}$ gives us

\begin{align}
 &\ \min_{\substack{u\in \mathbb{R}, \\ \bfq \in \mathbb{R}^{\calM_k \times \calM_{\leq k}}} } \quad  u +3\left(\sum_{\substack{{W \in \calM_k} \\{W' \in \calM_{\leq k}\setminus \{D_\bfy(W)\}}}} \bfq[W,W'] \right)  - \sum_{\substack{{W \in \calM_k} \\{W' \in \calM_{\leq k}}}} \frac{|W'|}{k}\bfq[W,W']  \quad  \tag{$\DPy{\bfy}$}
 \\
 &\textnormal{s.t.} \, \,  \sum_{\substack{{W \in \calM_k} \\{W' \in \calM_{\leq k}}}} \bfq[W,W'] = 1 \quad \text{and} \quad \forall W \in \calM_k , W' \in \calM_{\leq k}: \bfq[W,W'] \geq 0, \nonumber
\\
&\quad \, \forall \vote \subseteq \alt : u \geq \sum_{\substack{{W \in \calM_k, W' \in \calM_{\leq k}:}\\{|W'\cap \vote| > |W \cap \vote|}}}  \bfq[W,W'] \nonumber
\end{align}

    Using strong duality, we have $\DPy{\bfy}=\LPy{\bfy}$. Recall that in \MILP{}, the coefficient 3 in front of $(1-\bfy[W,W'])$ is simply chosen to provide a large slack to the constraint whenever $y[W,W']=0$. Therefore, the number 3 can be replaced by any larger number $M \geq 3$ without affecting the feasible solution set of $\LPy{\bfy}$: for any $W,W'$ such that $\bfy[W,W']=1$, the coefficient will disappear, and for for any $W,W'$ such that $\bfy[W,W']=0$, the constraint will be satisfied by all $\bfx$ and $\mu \leq 1$ (any $\mu >1$ will not be feasible for any $M$ anyway, due to the constraints for which $\bfy[W,W']=1$). This implies that additionally constraining $\DPy{\bfy}$ with $\bfq[W,W']=0$ for all $W\in \calM_k$, $W' \in \calM_{\leq k} \setminus \{D_\bfy(W)\}$ will not change its optimal value, since if all optimal variable assignments involved $\bfq[W,W']> 0$ for some $W' \neq W$ we could have replaced 3 with a larger number and increase the value of the dual without changing that of the primal, which contradicts strong duality. This allows us to significantly decrease the number of variables in the dual to get
 \begin{align}
        &\min_{\bfq \in \R^{\calM_k}, u \in \R} \quad  u - \sum_{W \in \calM_k} \frac{|D_\bfy(W)|}{k}\bfq[W] \tag{$\DLP_{\bfy}$} 
         \\
        &\textnormal{s.t.} \, \, \sum_{W \in \calM_k} \bfq[W] = 1 \, \, \text{and} \, \, \forall W \in \calM_k : \bfq[W] \geq 0 \nonumber
        \\
        &\quad \, \forall \vote \subseteq \alt : \sum_{\substack{{W \in \calM_k:} \\{|D_\bfy(W) \cap \vote| > |W \cap \vote| }}}  \bfq[W] \quad \leq u \nonumber
    \end{align}
    while still ensuring $\DLP_{\bfy}=\DPy{\bfy}=\LPy{\bfy}$. Using \eqref{eq:duality}, this implies $\MILP=\max_{\bfy\in Y} \DLP_{\bfy}$. Since there is a one-to-one correspondence between $\bfy \in Y$ and functions from $D: \calM_{k}$ to $\calM_{\leq k}$, the theorem statement follows.
 
\end{proof}

\subsection{Proof of \Cref{cor:prob}}

\lottery*

\begin{proof}
    Any feasible assignment to $\bfq$ in \DLP{} or \DrDLP{} can be interpreted as a probability distribution over $\calM_{k}$, in which case we have
    \begin{align*}
        \sum_{W \in \calM_k} |D(W)|\cdot\bfq[W] &=   \underset{W \sim \bfq}{\Mean}\big[\left|D(W)\right|\big], \\
       \forall \vote \subseteq \alt : \sum_{\substack{{W \in \calM_k:} \\{|D(W) \cap \vote| > |W \cap \vote| }}}  \bfq[W] &= \underset{W \sim \bfq}{\Prob}\big[|W \cap \vote| < |D(W) \cap \vote| \big].
    \end{align*}
    Plugging these into $\DLP$, it is clear that any optimal assignment $(u^*,\bfq^*)$ to the program must satisfy $$u^* = \max_{\vote \subseteq
     \alt} \underset{W \sim \bfq^*}{\Prob}\big[|W \cap \vote| < |D(W) \cap \vote| \big].$$
    With this in mind, the corollary statement follows from \Cref{lemma:core linear}, \Cref{def:droopcore}, and \Cref{thm:dual_eq}, which yield that the core (resp.\ Droop core) is non-empty for all profiles / vote distributions if and only if $\DLP < 0$ (resp.\ $\DrDLP \leq 0$) for all deviation functions $D$. 
\end{proof}

\subsection{Connection to \citet{Cheng19:Group}}\label{appsec:connection}

Working in the same setting as us, \citet[Sec. 2.3]{Cheng19:Group} show that, given any $m$ and $k$, for every distribution $\bfr \in \Delta(\calM_{\leq k})$ over deviations, there exists a distribution $\bfq \in \Delta(\calM_k)$ over $k$-committees (called a \emph{stable lottery}) such that for all votes $\vote \subseteq \alt$, we have 
 \begin{equation} \label{eq:stablecom}
     \underset{W \sim \bfq, W' \sim \bfr}{\Prob}\big[|W \cap \vote| < |W' \cap \vote| \big] < \, \frac{ \underset{W' \sim \bfr}{\Mean}\big[\left|W' \right|\big]}{k }.
 \end{equation}
This statement differs from the boxed statement in our \Cref{cor:prob} by the fact that we are given not a deviation function but a distribution over deviations. This allows \citet{Cheng19:Group} to use the independence of $W$ and $W'$ in their analysis, which we cannot employ (as $D(W)$ crucially depends on $W$). Nevertheless, we are hopeful that \Cref{cor:prob} may pave the way for a probabilistic analysis towards proving core non-emptiness (of deterministic committees), possibly similar to \citeauthor{Cheng19:Group}'s approach for stable lotteries. 

It is also worth pointing out that \citet{Cheng19:Group} arrive at \eqref{eq:stablecom} by observing that whether there exists a core-stable committee for a given approval profile $\profile = (A_i)_{i \in N}$ can be interpreted as a two-player zero-sum game: first, the minimizer chooses a committee $W \in \calM_k$; the maximizer then chooses a deviation $W' \in \calM_{\leq k}$ and gets utility
\begin{align*}
     |\{i \in N: |\vote_i \cap W'| > |\vote_i \cap W| \}|- n\frac{|W'|}{k}.
\end{align*}
Then, the profile admits no core-stable committee iff the value of the game is non-negative. If we instead have the maximizer and the minimizer choose an action simultaneously, the value of the game is nonnegative iff the profile admits a  stable lottery.

Following a similar reasoning (and using \Cref{cor:prob}), our version of the optimization problem given by \MMM{} (where we also search over all vote distributions $\bfx$ for a given $m,k$) can be interpreted as an (normal-form) \emph{adversarial team game}, where the maximizer is now a team of two players: one team member that picks a specific vote $\vote \subseteq \alt$, and another team member that picks a deviation function $D: \calM_k \rightarrow \calM_{\leq k}$, while the minimizer still picks a committee $W \in \calM_k$. If we set the utility of the maximizer team to be $\1[|D(W) \cap \vote | > |W \cap \vote |]- \frac{|D(W)|}{k}$ (where $\1$ is the indicator function), it can be shown that their \emph{Team-Maxmin payoff}~\citep{vonStengel97:Team} is nonnegative iff \MMM{} is nonnegative. 

If the two team members could correlate their strategies, their expected utility would be trivially nonnegative for sufficently large $m$. Whether \MMM{} is nonnegative (corresponding to when the two maximizer cannot correlate), is then directly related to the \emph{price of uncorrelation} of this adversarial team game~\citep{Basilico17:Team}. Equivalently, it is directly related to the \emph{value of recall}~\citep{Berker25:Value} if the maximizers are interpreted as a single player with imperfect recall. We believe these additional perspectives on the core non-emptiness problem can lead to valuable insights.

\subsection{Proof of \Cref{thm:m=k+1}}
\kplusone*
\begin{proof}
    Fix $m$, and say $k=m-1$ and $\alt=\{c_1,c_2,\ldots,c_m\}$. For convenience, we refer to each $k$-committee by the single candidate they do not contain; \emph{i.e.}, we write $\calM_k = \{W^{(j)} \}_{j \in [m]}$ where $W^{(j)} = \alt \setminus \{c_j\}$. We consider two cases:
    
    \noindent\textbf{\emph{Case 1:}} There is some $j \in [m]$ such that $D(W^{(j)}) \subseteq W^{(j)}$. In that case, set  $\bfq[W^{(j)}]=1$ and $\bfq[W]=0$ for all other $W$. Similarly, set $u=0$. This is a feasible solution to both \DLP{} and \DrDLP{}, as $\underset{\substack{{W \in \calM_k:} \\{|D(W) \cap \vote| > |W \cap \vote| }}}{\sum} \bfq[W]=0=u$ for all $\vote \subseteq \alt$. The objective of \DLP{} for this assignment is
    \begin{align*}
 u - \sum_{W \in \calM_k} \frac{|D(W)|}{k}\bfq[W]   = -\frac{|D(W^{(j)})|}{k} \leq -\frac{1}{k} < -\frac{1}{k(k+1)} .  \end{align*}
 For \DrDLP{}, respectively, the objective is 
    \begin{align*}
 u - \sum_{W \in \calM_k} \frac{|D(W)|}{k+1}\bfq[W]   = -\frac{|D(W^{(j)})|}{k+1} \leq -\frac{1}{k+1} <0. \end{align*}
Since both programs involve minimizing the objective, this assignment gives us the desired upper bounds from the theorem statement.

 \noindent\textbf{\emph{Case 2:}} There is no $j \in [m]$ such that $D(W^{(j)}) \subseteq W^{(j)}$. Equivalently, we have $c_j \in D(W^{(j)})$ for all $j \in [m]$. We first prove a useful lemma:
    \begin{lemma}\label{lemma:kplusone}
        Given any vote $\vote \subseteq \alt$ and candidate $c_j$ for some $j \in [m]$, we have $|D(W^{(j)}) \cap \vote| > |W^{(j)} \cap \vote|$ if and only if both (1) $c_j \in \vote$ and (2) $\vote \subseteq D(W^{(j)})$ are satisfied.
    \end{lemma}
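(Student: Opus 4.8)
The plan is to translate the claimed equivalence into an elementary fact about set sizes, exploiting that $W^{(j)}$ is a co-singleton. First I would record three observations. (i) Since $W^{(j)} = \alt \setminus \{c_j\}$, we have $W^{(j)} \cap \vote = \vote \setminus \{c_j\}$, so $|W^{(j)} \cap \vote|$ equals $|\vote|$ if $c_j \notin \vote$ and $|\vote| - 1$ if $c_j \in \vote$. (ii) Writing $D \coloneq D(W^{(j)})$, the Case~2 hypothesis says $c_j \in D$. (iii) Always $D \cap \vote \subseteq \vote$, hence $|D \cap \vote| \le |\vote|$, with equality if and only if $\vote \subseteq D$.

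For the ``if'' direction, assume $c_j \in \vote$ and $\vote \subseteq D$. By (iii), $|D \cap \vote| = |\vote|$, while by (i), $|W^{(j)} \cap \vote| = |\vote| - 1$; hence $|D \cap \vote| > |W^{(j)} \cap \vote|$, as required.

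For the ``only if'' direction, suppose $|D \cap \vote| > |W^{(j)} \cap \vote|$. If $c_j \notin \vote$, then by (i) the right-hand side is $|\vote|$, contradicting $|D \cap \vote| \le |\vote|$ from (iii); so $c_j \in \vote$, which is condition~(1). Now by (i) the inequality reads $|D \cap \vote| > |\vote| - 1$, and since $|D \cap \vote|$ is an integer bounded above by $|\vote|$, this forces $|D \cap \vote| = |\vote|$; by (iii) this means $\vote \subseteq D$, which is condition~(2).

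The main ``obstacle'' is really just bookkeeping: one has to split on whether $c_j \in \vote$ and keep track of the trivial containment $D \cap \vote \subseteq \vote$, with no substantive difficulty. It is worth flagging that the equivalence in fact holds for an arbitrary $D(W^{(j)}) \in \calM_{\leq k}$ --- the Case~2 assumption $c_j \in D(W^{(j)})$ is used only to ensure that conditions (1)--(2) are jointly satisfiable by \emph{some} $\vote$ --- which is why the lemma is stated inside Case~2 but proved without invoking its hypothesis.
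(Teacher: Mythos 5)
Your proof is correct and follows essentially the same route as the paper's: both directions reduce to the observations that $W^{(j)} \cap \vote = \vote \setminus \{c_j\}$ and $|D(W^{(j)}) \cap \vote| \leq |\vote|$ with equality iff $\vote \subseteq D(W^{(j)})$. Your remark that the Case~2 hypothesis $c_j \in D(W^{(j)})$ is not actually needed for the equivalence is accurate (the paper's proof does not use it either), but it changes nothing of substance.
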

    \begin{proof}[Proof:]
        \noindent $(\Rightarrow)$: Assume $|D(W^{(j)}) \cap \vote|> |W^{(j)} \cap \vote|$. Then we have $|\vote| \geq |D(W^{(j)}) \cap \vote| > | W^{(j)} \cap \vote | = |\vote \setminus \{c_j\}|$, so $c_j \in \vote$.  Moreover, this implies we have $|\vote| \geq |D(W^{(j)}) \cap \vote| > |\vote\setminus\{c_j\}| = |\vote|-1$, so $|\vote| =|D(W^{(j)}) \cap \vote|$ and hence $\vote \subseteq D(W^{(j)})$. 
        
        \noindent $(\Leftarrow)$: Assume (1) and (2) from the right hand side of the lemma statement. Then, we have $|D(W^{(j)}) \cap \vote|=|\vote|>|\vote|-1=|\vote \setminus \{c_j\}| = |\vote \cap W^{(j)}|$, where the first and second equalities follow from (2) and (1), respectively.
    \end{proof}
    
    Now, we are ready to give a feasible assignment to \DLP{} and \DrDLP{} that proves the theorem statement. We do a similar construction to the proof of \Cref{thm:singleton}:
    \begin{itemize}
        \item Step 1: Fix some arbitrary $W_1 \in \calM_k$ (Note that we use subscripts for the step $i=1,2,...$, and superscripts for candidates that are not included, \emph{i.e.}, $W_1=W^{(j)}=\alt \setminus \{c_j\}$ for some $j \in [m]$). 
        \item  Step $i=2,3,\ldots$: If $| \cup_{i' \in [i-1]} D(W_{i'})| = m$, then terminate. Otherwise, there is some candidate $c_j \notin \cup_{i' \in [i-1]} D(W_{i'})$. Set $W_i=W^{(j)}$. 
    \end{itemize}
    This process must terminate after fixing $\{W_i\}_{i \in [T]}$ for some $T \leq m$ steps, since at each step we increase $|\cup_{i' \in [i-1]} D(W_{i'})|$ by at least 1 by the premise in Case 2, and since $|\cup_{i' \in [i-1]} D(W_{i'})|$ is upper bounded by $m$. Set $\bfq[W]= \begin{cases} \frac{1}{T}&\text{if }W \in \{W_i\}_{i \in [T]} \\ 0 &\text{otherwise}\end{cases}$ and $u= \frac{1}{T}$. We will show that this assignment is feasible. Fix any $\vote \subseteq \alt$. We claim $|D(W_i) \cap \vote|> |W_i \cap \vote|$ for at most one $i \in [T]$, which shows the second set of constraints in \DLP{} is satisfied. Say $i^*$ is the smallest $i \in [T]$ such that $|D(W_i) \cap \vote|> |W_i \cap \vote|$ (If no such $i$ exists, we are done). Since $W_i \in \calM_k$ by construction, we must have $W_{i^*}=W^{(j^*)}$ for some $j^* \in [m]$. By \Cref{lemma:kplusone}, we have $c_{j^*} \in \vote$ and $\vote \subseteq D(W^{(j^*)})$. Take any $i \in \{i^*+1,i^*+2,\ldots,T\}$. Again, we must have $W_{i}=W^{(j)}$ for some $j \in [m]$. By construction, we have $c_{j} \notin \cup_{i' \in [i-1]}D(W_{i'}) \Rightarrow c_j \notin D(W_{i^*})$. Since $\vote \subseteq D(W^{(j^*)})=D(W_{i^*})$, this implies $c_j \notin \vote$, and hence we cannot have $|D(W^{(j)}) \cap \vote| > |W^{(j)} \cap \vote|$ by \Cref{lemma:kplusone}. We therefore have  $|D(W_i) \cap \vote|=|D(W^{(j)}) \cap \vote| \leq  |W^{(j)} \cap \vote| =|W_i \cap \vote|$, proving our claim. Hence, $u=\frac{1}{T}$ with the above $\bfq$ is a feasible assignment to \DLP{} (and \DrDLP). Further, we have $\sum_{W \in \calM_k} |D(W)| \cdot \bfq[W] = \frac{1}{T} \sum_{i \in [T]} |D(W_i)| \geq  \frac{1}{T}  | \cup_{i \in [T]} D(W_i)|=\frac{m}{T}$. Therefore for any $\ell \in \{k,k+1\}$, we have $ u - \sum_{W \in \calM_k} \frac{|D(W)|}{\ell}\bfq[W]  \leq \frac{1}{T} - \frac{m}{T\ell} =-\frac{m-\ell}{T\ell} \leq -\frac{m-\ell}{m\ell}= -\frac{k+1-\ell}{(k+1)\ell}$. Plugging in $\ell=k$ (resp.\ $\ell=k+1$) gives the desired upper bound for \DLP{} (resp.\ \DrDLP).
\end{proof}

\section{On \Cref{sec:lindahl}}

\subsection{Relationship between priceability axioms}\label{appsec:pbility}

In this subsection, we clarify the relationship between various notions of priceability introduced in prior literature, namely \emph{priceability} by \citet{Peters20:Proportionality}, as well as \emph{Lindahl priceability} and \emph{weak priceability} by \citet{Munagala22:Auditing}. Note that to stay consistent with the model of these papers, we represent elections with approval profiles $\profile=(\vote_i)_{i \in N}$ in this subsection, although these axioms can also be stated using our model of vote distributions (\emph{e.g.}, see \Cref{lemma:pbility-linear}).

First, we recall the definition of Lindahl priceability from \Cref{sec:lindahl}.

\lindahlpbility*

Weak priceability, also introduced by \citet{Munagala22:Auditing}, can be defined in a similar way, with only changing condition 2 of \Cref{def:lindahl}.

\begin{defn}\label{def:wpbility}
    A $k$-committee $W$ is \emph{weakly priceable} with respect to profile $\profile$ if $\exists$ a price system 
    $\{\bfp[i,c]\}_{i \in N, c \in \alt} \geq 0$ such that
    \begin{enumerate}
        \item $\forall c \in \alt$: $\sum_{i \in N} \bfp[i,c] \leq \frac{n}{k}$, and
        \item $\forall i \in N, d \in A_i \setminus W: \bfp[i,d]+\underset{c \in A_i \cap W}{\sum} \bfp[i,c] > 1$.
    \end{enumerate}

\end{defn}

It is clear that Lindahl priceability implies weak priceability. The reverse implication is false, as evidenced by the fact that Lindahl priceability implies core stability, whereas there are various voting rules (\emph{e.g.}, the Phr{\'a}gmen rule) that satisfy weak priceability but fail core stability~\citep{Munagala22:Auditing}. Next, we introduce an earlier definition by \citet{Peters20:Proportionality}.

\begin{defn}\label{def:pbility}
    A $k$-committee $W$ is \emph{priceable} with respect to profile $\profile$ if $\exists$ a pair $(r, \{f_i\}_{i \in N})$ consisting of a \emph{price} $r > 0$ and, for each voter $i \in N$, a  \emph{payment function} $f_i\colon \alt \to [0, 1]$, such that
    \begin{enumerate}
        \item If $f_i(c) > 0$, then $c \in A_i$ (a voter can only pay for candidates she approves of),
        \item For each $i\in N:$ $\sum_{c \in  \alt} f_i(c) \leq 1$ (a voter can spend at most one dollar) \text{.}
        \item For each elected candidate $c \in W$, the sum of the payments to this candidate equals the price: $\sum_{i \in N}f_i(c) = r$.
        \item No candidate outside of the committee $c \in \alt \setminus W$ gets any payment: $\sum_{i \in N}f_i(c) = 0$.
        \item For each candidate outside of the committee $c \in \alt \setminus W$, the remaining unspent budget of the voters that approve that candidate is at most $r$:
        \begin{align*}
            \sum_{i \in N\colon c\in \vote _i} \left(1 - \sum_{c' \in W}f_i(c')\right) \leq r \text{.}
        \end{align*}
    \end{enumerate}
\end{defn}

We first show that Lindahl priceability and priceability are incomparable. Similar to weak priceability, the fact that priceability does not imply Lindahl priceability follows from the existence of voting rules (\emph{e.g.}, the Phr{\'a}gmen rule) that satisfy priceability but fail core stability~\citep{Peters20:Proportionality}. We now formalize the other direction.

\begin{prop}\label{prop:lindnotpble}
    There exists an approval profile such that there is a committee that is Lindahl priceable (and therefore weakly priceable) but not priceable.
\end{prop}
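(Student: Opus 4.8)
### Proof Plan for Proposition \ref{prop:lindnotpble}

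The plan is to exhibit a small explicit approval profile together with a committee $W$, and then (a) demonstrate a Lindahl price system witnessing that $W$ is Lindahl priceable, and (b) argue that no priceability payment scheme in the sense of Definition \ref{def:pbility} can exist for $W$. The natural place to look is the smallest regime where the two notions can diverge: since the paper has already used $m=5$, $k=3$ and $m=4$, $k=2$ for related separations, I would first search for a witness with $m$ and $k$ in this range (indeed, one can use the \Gurobi{}-based machinery described in \Cref{sec:lindahl}, modified to encode ``Lindahl priceable but not priceable'' instead of ``core-stable but not priceable,'' to pin down the minimal instance). The key structural difference to exploit is that priceability (Definition \ref{def:pbility}, condition 3) demands a \emph{uniform} price $r$: every elected candidate must receive exactly the same total payment. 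Lindahl priceability has no such rigidity — each candidate only needs total price at most $\frac{n}{k}$, and the per-voter ``strictly prefers $\Rightarrow$ overpays'' condition is checked against \emph{all} supersets $T$, not just single additions. So the strategy is to build a profile where one elected candidate is ``cheap'' (little support) and another is ``expensive,'' forcing non-uniform prices, yet the approval structure is rich enough that some Lindahl price assignment still blocks all profitable deviations.

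Concretely, the first step is to fix the profile and candidate/committee data and verify Lindahl priceability directly: write down a price matrix $\{\bfp[i,c]\}$, check condition 1 ($\sum_i \bfp[i,c] \le \frac{n}{k}$ for every $c$), and check condition 2 by enumerating, for each voter $i$, the sets $T$ with $|\vote_i \cap T| > |\vote_i \cap W|$ and confirming $\sum_{c\in T}\bfp[i,c] > 1$; it suffices to check the minimal such $T$, namely those obtained by swapping in exactly one new approved-and-cheaper candidate for each candidate in $\vote_i\setminus W$ appropriately, so the enumeration is finite and short. The second step is the impossibility half. Here I would assume for contradiction that a priceable scheme $(r,\{f_i\})$ exists and derive a contradiction by a counting/accounting argument: the total money paid out is $\sum_i \sum_{c\in\alt} f_i(c) = k\cdot r$ by conditions 3 and 4, which bounds $r \le n/k$; then condition 5, applied to a carefully chosen non-elected candidate $c^*$ that is approved by many voters whose budgets are barely spent (because the candidates they approve in $W$ are cheap, forcing small payments), yields $\sum_{i: c^*\in\vote_i}(1-\sum_{c'\in W}f_i(c')) > r$, contradicting condition 5. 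The uniform-price constraint is what makes this bite: it prevents the scheme from ``loading'' payment onto the popular elected candidates to drain those voters' budgets.

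The main obstacle I anticipate is the impossibility direction — specifically, making the accounting argument airtight across \emph{all} valid priceable schemes rather than one convenient one. Unlike Lindahl priceability, where I only need to exhibit a single good price system, here I must rule out every $(r,\{f_i\})$, and priceability gives the scheme designer a lot of freedom in how each voter splits their (at most) one dollar among approved elected candidates. The cleanest way to handle this is probably to keep the profile as small and symmetric as possible so that conditions 3 and 5 together pin down the payments almost uniquely: e.g., if an elected candidate $c$ is approved by exactly $t$ voters, then those $t$ voters must jointly contribute $r$ to $c$, and if those same voters approve few other elected candidates, their residual budgets are essentially forced. If the minimal instance turns out to require a genuinely non-symmetric argument, I would instead set up a small linear-feasibility certificate (a nonnegative combination of the priceability constraints summing to $0 > 0$), analogous to the explicit dual-certificate contradiction $0 < 0$ derived at the end of \Cref{sec:lindahl} — this makes the impossibility proof fully human-checkable regardless of how the witness was found.
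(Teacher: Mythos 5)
Your overall approach is the paper's: exhibit a small explicit profile and committee, verify Lindahl priceability with a concrete price system, and refute priceability by exploiting the rigidity of the uniform price $r$ in condition 3 of \Cref{def:pbility}. The paper's witness is exactly of the shape you describe: $n=5$, $k=2$, candidates $\{a,b,d\}$, approval sets $A_1=\{a\}$ and $A_2=\dots=A_5=\{b,d\}$, committee $W=\{a,b\}$, with Lindahl prices $\bfp[i,b]=0.6$, $\bfp[i,d]=0.5$ for $i\neq 1$ and $0$ elsewhere; here $a$ is your ``cheap'' elected candidate and $d$ the popular unelected one whose supporters retain too much budget.

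That said, your proposal has a genuine gap in the impossibility half. You never commit to a concrete profile (deferring to a solver search), and the only quantitative bound you actually state---$r\le n/k$ from the total-payment identity $\sum_i\sum_c f_i(c)=kr$---is too weak to close the argument: in the witness above it gives $r\le 2.5$, while condition 5 applied to $d$ gives $r\ge \sum_{i\neq 1}\bigl(1-f_i(b)\bigr)=4-r$, i.e.\ $r\ge 2$, which is perfectly consistent with $r\le 2.5$. The contradiction in the paper comes from a sharper, \emph{local} bound: candidate $a$ is approved only by voter $1$, so conditions 1--4 force $r=\sum_i f_i(a)=f_1(a)\le 1$. Your ``cheap candidate with little support'' intuition is pointing at exactly this, but the executed proof must bound $r$ by the aggregate one-dollar budgets of the cheap candidate's few supporters, not by the global accounting you wrote down. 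Once you make that step explicit and fix a profile in which one elected candidate has a single approver while an unelected candidate shares all of its many approvers with the other elected candidate, the rest of your plan (pinning down payments via conditions 3--5 and deriving a contradiction) goes through and essentially reproduces the paper's proof.
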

\begin{proof}
    Say $n=5,k=2$. The candidates are $\alt=\{a,b,d\}$, and  the approval sets are $A_1=\{a\}$ and $A_i=\{b,d\}$ for all $i \neq 1$. The committee is $W=\{a,b\}$.

    $W$ is Lindahl priceable (\Cref{def:lindahl}): setting $\bfp[i,b] = 0.6$ and  $\bfp[i,d] = 0.5$ for all $i  \neq 1$, and $\bfp[i,c] = 0$ for all other voter-candidate pairs satisfies both constraints. In particular, the left hand side of condition 1 of \Cref{def:lindahl} for $a$, $b$, and $d$ are 0, 2.4, and 2, respectively, all of which is less than $n/k=2.5$. Condition 2 is only relevant for $i  \neq 1$ and $T=\{b,d\}$, for which we get $0.5+0.6>1$ as desired.
    
    However, $W$ is not priceable (\Cref{def:pbility}). Assume for contradiction that there exists a pair $(r, \{f_i\}_{i \in N})$ that satisfies all the conditions. Then conditions 1, 2, and 3 gives us $1 \geq \sum_{c\in \alt} f_1(c) = f_1(a) = \sum_{i\in N} f_i(a)= r$. Similarly, we have $r= \sum_{i  \neq 1} f_i(b)$. However, invoking condition 5 for $d$ then gives us $r \geq \sum_{i\neq 1} (1- f_i(b)) = 4 - r$, so $r \geq 2$, a contradiction.
\end{proof}

We now compare priceability by \citet{Peters20:Proportionality} with weak priceability by \citet{Munagala22:Auditing}. \Cref{prop:lindnotpble} already establishes that weak priceability does not imply priceability. We now prove that the other direction holds, \emph{i.e.} priceability does imply weak priceability.

\begin{prop}
    Any priceable committee is also weakly priceable.
\end{prop}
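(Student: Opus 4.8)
The goal is to show that any priceable committee (in the sense of Peters's Definition~\ref{def:pbility}) is also weakly priceable (Definition~\ref{def:wpbility}). The plan is to take a priceability witness $(r, \{f_i\}_{i \in N})$ for a $k$-committee $W$ and construct from it a weak-priceability price system $\{\bfp[i,c]\}$. The natural first attempt is simply to rescale: the budget in Peters's definition is $1$ per voter with candidate prices equal to $r$, whereas weak priceability fixes the total budget implicitly through the constraint $\sum_{i} \bfp[i,c] \le \tfrac{n}{k}$ and asks unspent-plus-redirected budget to exceed $1$. So I would first argue that $r \le \tfrac{1}{k} \cdot$ (something), or more precisely relate $r$ to $\tfrac{n}{k}$: since at most the whole electorate's budget $n$ is spent and by condition~3 each of the $k$ elected candidates receives exactly $r$, we get $kr \le n$, i.e.\ $r \le \tfrac{n}{k}$.

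The key step is then the following substitution. For each voter $i$ and candidate $c$, I would set $\bfp[i,c] := \tfrac{f_i(c)}{r}$ for $c \in W \cap A_i$, and for the ``threat'' direction I need to assign prices to candidates $c \notin W$ that voter $i$ approves; here I would set $\bfp[i,c]$ for $c \in A_i \setminus W$ to be the voter's \emph{remaining} (unspent) budget divided by $r$, namely $\bfp[i,c] := \tfrac{1}{r}\bigl(1 - \sum_{c' \in W} f_i(c')\bigr)$, and $\bfp[i,c] := 0$ otherwise. Then condition~1 of weak priceability for an elected candidate $c \in W$ becomes $\sum_i \bfp[i,c] = \tfrac{1}{r}\sum_i f_i(c) = \tfrac{1}{r}\cdot r = 1 \le \tfrac{n}{k}$ (using $n \ge k$); for a non-elected $c \notin W$ it becomes $\sum_{i: c \in A_i} \tfrac{1}{r}(1 - \sum_{c'\in W} f_i(c')) \le \tfrac{1}{r}\cdot r = 1 \le \tfrac{n}{k}$ by Peters's condition~5; and for $c$ approved by no one it is $0$. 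Condition~2 of weak priceability, for voter $i$ and $d \in A_i \setminus W$, reads $\bfp[i,d] + \sum_{c \in A_i \cap W}\bfp[i,c] = \tfrac{1}{r}\bigl(1 - \sum_{c' \in W} f_i(c')\bigr) + \tfrac{1}{r}\sum_{c \in A_i \cap W} f_i(c)$; since by condition~1 any $c'$ with $f_i(c') > 0$ lies in $A_i$, the payments $\sum_{c' \in W} f_i(c')$ and $\sum_{c \in A_i \cap W} f_i(c)$ coincide, so this telescopes to $\tfrac{1}{r}\cdot 1 = \tfrac{1}{r} > 1$ iff $r < 1$, which holds by the argument $r \le 1$ from conditions 1--3 applied to any voter (and I would handle the boundary $r = 1$ separately or note it is ruled out, or shrink $r$ slightly, or use a strict inequality somewhere).

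The main obstacle I anticipate is precisely the \emph{strictness} in condition~2 of weak priceability: Peters's definition only guarantees $r \le 1$, not $r < 1$, so the rescaled quantity $\tfrac{1}{r}$ could equal $1$ rather than exceed it. I would resolve this by observing that if $r = 1$ then every voter must spend their entire budget on $W$ (so condition 5 forces no voter approving an outside candidate has leftover budget, hence $\sum_{i: c \in A_i}\cdots = 0 \le r$ trivially, which is fine) --- but more carefully, if some voter $i$ has $d \in A_i \setminus W$ then condition~5 for $d$ gives $\sum_{j: d \in A_j}(1 - \sum_{c'}f_j(c')) \le 1$, and I can perturb: decrease $r$ to $r' = r - \varepsilon$ and rescale payments by $r'/r$, checking all five conditions still hold for small $\varepsilon$; alternatively, I would argue that priceability with price $r$ implies priceability with any smaller positive price by scaling down all $f_i$ proportionally (conditions 1,2,3,4 scale directly; condition~5's left side only increases as payments shrink, but stays bounded by choosing $\varepsilon$ appropriately --- this needs a short check). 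The cleanest fix is likely to normalize so that $r < 1$ strictly whenever there is any outside candidate approved by a voter (the only case where condition~2 has content), which is a routine but necessary argument. Apart from this strictness bookkeeping, the rest is the direct dictionary-translation computation sketched above.
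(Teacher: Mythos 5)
There is a genuine gap, and it is the rescaling by $r$. Your construction sets $\bfp[i,c] = f_i(c)/r$ and $\bfp[i,d] = \tfrac{1}{r}\bigl(1-\sum_{c'\in W}f_i(c')\bigr)$, so condition~2 of weak priceability telescopes to exactly $\tfrac{1}{r}$, and you need $r<1$. But your claim that ``$r\le 1$ follows from conditions 1--3 applied to any voter'' is false: condition~3 says the \emph{sum over all voters} of payments to an elected candidate equals $r$, so $r$ can be as large as $n/k$ (your own first computation gives only $kr \le n$), and in typical priceable outcomes $r \approx n/k \gg 1$. For instance, with $n=10$, $k=1$, all voters approving the single winner and each paying $1$, one has $r=10$ and your condition~2 evaluates to $1/10$, nowhere near $>1$. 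The proposed repairs do not close this: perturbing $r$ by $\varepsilon$ only addresses the boundary case $r=1$, and scaling all payments down by a factor $\lambda$ breaks condition~5 of priceability, whose left-hand side grows as payments shrink while its right-hand side $\lambda r$ shrinks --- so ``priceability at any smaller price'' is simply not available.

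The fix is to \emph{not} normalize at all, which is what the paper does: set $\bfp[i,c]=f_i(c)$ for $c\in W$ and $\bfp[i,d]=\bigl(1-\sum_{c'\in W}f_i(c')\bigr)+\epsilon_{i,d}$ for approved $d\notin W$. Then condition~1 of weak priceability holds for elected candidates because their total price is exactly $r\le n/k$ (no need to force it down to $1$), and for non-elected candidates because condition~5 bounds the total leftover by $r$, leaving slack $n/k-r$ to absorb the $\epsilon$ terms (when $n/k=r$ every voter spends their full budget, all leftovers are zero, and one can take $\epsilon_{i,d}=1/k$). Condition~2 then reads $1+\epsilon_{i,d}>1$, since payments to $W$ plus leftover equal the unit budget. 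Note also that the genuine strictness issue is not whether $r<1$ versus $r=1$, but that the unscaled sum equals exactly $1$; that is what the $\epsilon$ terms are for, and they are where the real bookkeeping lives.
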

\begin{proof}
    Say $W$ is priceable, as certified by a pair $(r, \{f_i\}_{i \in N})$ satisfying the conditions of \Cref{def:pbility}. Note that by conditions 1-4, we have
    \begin{align*}
        n \geq \sum_{i \in N} \sum_{c \in \alt} f_i(c) = \sum_{c \in W} \sum_{i \in N} f_i(c) = kr \Rightarrow \frac{n}{k} \geq r
    \end{align*}

    If $\frac{n}{k}<r$, for each $c \in \alt \setminus W$ and $i \in N$ such that $c \in A_i$, pick a $\epsilon_{i,c}>0$ small enough such that $\sum_{i \in N: c \in A_i} \epsilon_{i,c} \leq \frac{n}{k}-r$. Otherwise (if $\frac{n}{k}=r$), set $\epsilon_{i,c}=\frac{1}{k}$ for all such $i,c$.

    To prove that $W$ is weakly priceable, we set $\{\bfp[i,c]\}_{i \in N, c \in \alt}$ from \Cref{def:wpbility} as follows:
    \begin{itemize}
        \item For all $c \in W, i \in N$, set $\bfp[i,c]=f_i(c)$.
        \item For all $c \in \alt \setminus W, i \in N$ s.t.\ $c \in A_i$, set $\bfp[i,c]=\left(1-\sum_{c' \in W} f_i(c')\right)+\epsilon_{i,c}$
        \item For all other pairs of $i \in N$ and $c \in \alt$, set $\bfp[i,c]=0$.
    \end{itemize}
    We claim $\bfp$ satisfies both conditions of \Cref{def:wpbility}. First, we show it satisfies condition 1:
    \begin{itemize}
        \item $\forall c \in W: \sum_{i \in N} \bfp[i,c]= \sum_{i \in N}f_i(c) = r \leq \frac{n}{k}$, where the second equality follows from condition 3 of \Cref{def:pbility}.
        \item $\forall c \in \alt \setminus W:$ If $\frac{n}{k}>r$, we have $\sum_{i \in N} \bfp[i,c]= \sum_{i \in N: c \in A_i} \left(1-\sum_{c' \in W} f_i(c')\right)+\epsilon_{i,c}\leq r + \sum_{i \in  N: c \in A_i} \epsilon_{i,c} \leq r+\frac{n}{k}-r= \frac{n}{k}$, where the first inequality follows from condition 5 of \Cref{def:pbility}. If $\frac{n}{k}=r$, condition 2 of \Cref{def:pbility} must be an equality for each $i \in N$, \emph{i.e.}, $1=\sum_{c \in \alt} f_i(c) = \sum_{c \in W} f_i(c)$, where the second inequality follow from condition 4 of \Cref{def:pbility}. Then we have $\sum_{i \in N} \bfp[i,c]= \sum_{i \in N: c \in A_i} \left(1-\sum_{c' \in W} f_i(c')\right)+\epsilon_{i,c}= \sum_{i \in N: c \in A_i} 0+\epsilon_{i,c} = \frac{|\{i \in N: c \in A_i\}|}{k} \leq \frac{n}{k}$.
    \end{itemize}

    Finally, we show that our choice of $\bfp$ satisfies condition 2 of \Cref{def:wpbility}:
    \begin{itemize}
        \item $\forall i \in N, d \in A_i \setminus W: \bfp[i,d]+\underset{c \in A_i \cap W}{\sum} \bfp[i,c]= \left(1-\sum_{c' \in W} f_i(c')\right)+\epsilon_{i,d} + \sum_{c \in A_i \cap W} f_i(c) = 1+\epsilon_{i,d} >1$, where the last equality follows from condition 1 of \Cref{def:pbility}.
    \end{itemize}
\end{proof}

This concludes our discussion on the relationship between  different priceability axioms.\footnote{On another note, \citet{Peters21:Market} has studied the axiom \emph{stable priceability}, which implies priceability and core stability. Stable priceable committees, however, are known to sometimes not exist, whereas this is open for Lindahl priceability.}

\subsection{Proof of \Cref{lemma:pbility-linear}}
\linprice*

\begin{proof}
Fix $m,k,n$, profile $\profile$, and a $k$-committee $W$. For any $\vote \subseteq \alt$, define $\calT_{W}(\vote)= \{T \subseteq \alt: |\vote \cap T| > |\vote \cap W|\}$. We will prove the statement for Lindahl priceability; the proof for weak priceability follows identically, except we would have instead defined $\calT_{W}(\vote)= \{T \subseteq \alt: T=\{d\}\cup(\vote \cap W) \text{ for some }   d \in \vote \setminus W\}$. We prove the lemma in two steps.

\noindent\textbf{\emph{Step 1:}} First, we prove that $W$ is Lindahl priceable \wrt $\profile$ if and only if $\exists$ a price system 
    $\{\bfp[i,c]\}_{i \in N, c \in \alt} \geq 0$ such that
    \begin{enumerate}
        \item[1.] $\forall c \in \alt$: $\sum_{i \in N} \bfp[i,c] \leq \frac{n}{k}$,
        \item[2.] $\forall i \in N, T\in \calT_{W}(A_i): \underset{c \in T}{\sum} \bfp[i,c] > 1$, and
        \item[3.] $\forall i,j \in N$, $A_i=A_j \Rightarrow \bfp[i,c]=\bfp[j,c]$ for all $c \in \alt$.
\end{enumerate}
The ``if'' direction is trivial, since the first two conditions are sufficient for Lindahl priceability by \Cref{def:lindahl}. For the ``only if'' direction, assume $W$ is Lindahl priceable, implying there exists a price system $\{\bfp'[i,c]\}_{i \in N, c \in \alt} \geq 0$ that satisfies the first two conditions. We will construct an alternative price system $\bfp'$ that satisfies all three conditions. For any $\vote \subseteq \alt$, define $N(\vote)=\{i \in N: A_i =A\}$. Then, for any $i \in N$ and $c \in \alt$, we define $\bfp'[i,c]=\frac{1}{|N(A_i)|} \sum_{j \in N(A_i)}\bfp[j,c]$, \emph{i.e.}, each voter's price for a candidate is now the average over the (former) prices for that candidate of the voters with the same approval set. This new price system $\{\bfp'[i,c]\}_{i \in N, c \in \alt}$ clearly satisfies condition 3. It also satisfies condition 1, because the total payment to any candidate has not changed. For the second condition, we have $\forall i \in N, T\in \calT_{W}(A_i)$, 
\begin{align*}
    \underset{c \in T}{\sum} \bfp'[i,c] =  \underset{c \in T}{\sum} \frac{1}{|N(A_i)|}\underset{j \in N(A_i)}{\sum} \bfp[j,c] = \frac{1}{|N(A_i)|}\underset{j \in N(A_i)}{\sum} \underset{c \in T}{\sum}  \bfp[j,c] > \frac{1}{|N(A_i)|}\underset{j \in N(A_i)}{\sum} 1 = 1, 
\end{align*}
since $\calT_{W}(A_j)=\calT_{W}(A_i)$ for all $j \in N(A_i)$, so $\bfp$ must satisfy condition 2 for each $j \in N(A_i)$ and $T \in \calT_{W}(A_i)$ by assumption. 

\noindent \textbf{\emph{Step 2:}} By Step 1, we know that \wlogg{} we can restrict price systems in the definition of Lindahl priceability (\Cref{def:lindahl}) to those that assign the same prices to any $i,j \in N$ such that $A_i =A_j$. Equivalently, this implies that $W$ is Lindahl priceable \wrt $\profile$ if and only if $\exists$ a price system 
    $\{\bfp[A,c]\}_{A \subseteq \alt, c \in \alt} \geq 0$ such that
        \begin{enumerate}
        \item[1.] $\forall c \in \alt$: $\sum_{\vote \subseteq \alt} |N(A)| \cdot \bfp[\vote,c] \leq \frac{n}{k}$, and
        \item[2.] $\forall \vote \subseteq \alt, T\in \calT_{W}(A): \underset{c \in T}{\sum} \bfp[A,c] > 1$.
\end{enumerate}
Dividing both sides of the first condition by $n$ and noting $\bfx[\vote] = \frac{|N(\vote)|}{n}$ (where $\bfx$ is the vote distribution of $\profile$) gives us the lemma statement.
\end{proof}

\subsection{Dual formulation of weak/Lindahl priceability}\label{appsec:dualpbility}

Fix $m,k,n$, profile $\profile$, and a $k$-committee $W$. For any $\vote \subseteq \alt$, define $\calT_{W}(\vote)= \{T \subseteq \alt: |\vote \cap T| > |\vote \cap W|\}$ again. Every statement we make in this subsection about Lindahl priceability is true for weak priceability---following the definition by \citet{Munagala22:Auditing}---except we would have instead defined $\calT_{W}(\vote)= \{T \subseteq \alt: T=\{d\}\cup(\vote \cap W) \text{ for some }   d \in \vote \setminus W\}$. 

Due to \Cref{lemma:pbility-linear}, we have that $W$ is Lindahl priceable if and only if there exists a price system $\bfp = \{ \bfp[\vote,c] \}_{\vote \subseteq \alt, c \in \alt} \geq 0$ such that
    \begin{enumerate}
        \item[1.] $\forall c \in \alt$: $\sum_{\vote \subseteq \alt} \bfx[\vote] \cdot \bfp[\vote,c] \leq \frac{1}{k}$, and
        \item[2.] $\forall \vote \subseteq  \alt, T \in \calT_{W}(A): \underset{c \in T}{\sum} \bfp[\vote,c] > 1$
    \end{enumerate}
    Equivalently, $W$ is Lindahl priceable if and only if the value of the following linear program (recall that $\bfx$ is fixed) over variables $\{\bfp[A,c]\}_{\vote \subseteq \alt, c \in \alt}$ is strictly greater than 1:
\begin{align}
 &\max_{\bfp \in \R^{2^\alt \times \alt}, \, \varepsilon \in \R} \quad  \varepsilon \quad  \tag{\LinLP}
 \\
 &\textnormal{s.t.} \, \, \forall \vote \subseteq \alt, c \in \alt : \bfp[\vote, c] \geq 0 \nonumber
\\
&\quad \, \forall c \in \alt : \sum_{\vote \subseteq \alt} \bfx[\vote] \cdot \bfp[\vote,c] \leq \frac{1}{k} \nonumber
\\
&\quad \, \forall \vote \subseteq  \alt, T \in \calT_{W}(A): \underset{c \in T}{\sum} \bfp[\vote,c] \geq \varepsilon \nonumber
\end{align}

Equivalently, $W$ is \textbf{not} Lindahl priceable if and only if $\LinLP \leq 1$. It is straightforward to show that additionally imposing $\bfp[\vote, c] =0$ for all $\vote \subseteq \alt$ and $c \in \alt \setminus \vote$ does not hurt the objective of \LinLP, effectively bringing down variables to  $\{\bfp[A,c]\}_{\vote \subseteq \alt, c \in \vote}$. By using strong duality, this implies that $W$ is not Lindahl priceable  if and only if the value of the following program is at most 1:
\begin{align}
 &\min_{\bft \in \R^{\alt}, \, \bfg \in \R^{2^\alt \times 2^\alt }} \quad  \frac{1}{k} \sum_{c \in \alt} \bft[c]  \quad  \tag{\LinDLP}
 \\
 &\textnormal{s.t.} \, \, \forall  c \in \alt : \bft[c] \geq 0 \nonumber
\\
&\quad \,  \forall  \vote \subseteq \alt, T \in \calT_{W}(A) : \bfg[\vote,T] \geq 0 \nonumber
\\
&\quad \,  \sum_{\vote \subseteq \alt} \sum_{T \in \calT_{W}(\vote)} \bfg[\vote,T]=1 \nonumber
\\
&\quad \, \forall \vote \subseteq  \alt,  c \in A: \bft[c] \cdot x[\vote]\geq \underset{T \in \calT_{W}(A) \, : \, c \in T}{\sum} \bfg[\vote,T]  \nonumber
\end{align}

Then, in order to find a vote distribution $\bfx$ and a $k$-committe $W^*$ such that $W^*$ is in the (Droop) core but not Lindahl priceable, we can simply solve the following integer quadratic program (\wlogg{} fix $W^*=\{c_1,c_2,\ldots,c_k\}$):
\begin{align}
 &\min_{\bfx \in \R^{2^\alt}, \, \mu \in \R, \, \bft \in \R^{\alt}, \, \bfg \in \R^{2^\alt \times 2^\alt }} \quad  \mu \quad  \tag{\LinQIP}
 \\
 &\textnormal{s.t.} \, \, \sum_{\vote \subseteq \alt} \bfx[\vote] = 1 \quad \text{and} \quad \forall \vote \subseteq \alt : \bfx[\vote] \geq 0 \nonumber
\\
&\quad \, \frac{1}{k} \sum_{c \in \alt} \bft[c] 
 \leq 1 \nonumber
\\
&\quad \, \forall  c \in \alt : \bft[c] \geq 0 \nonumber
\\
&\quad \,  \forall  \vote \subseteq \alt, T \in \calT_{W^*}(A) : \bfg[\vote,T] \geq 0 \nonumber
\\
&\quad \,  \sum_{\vote \subseteq \alt} \sum_{T \in \calT_{W^*}(\vote)} \bfg[\vote,T]=1 \nonumber
\\
&\quad \, \forall \vote \subseteq  \alt,  c \in A: \bft[c] \cdot x[\vote]\geq \underset{T \in \calT_{W^*}(A) \, : \, c \in T}{\sum} \bfg[\vote,T]  \nonumber
\\
&\quad \, \forall W' \in \calM_{\leq k} :  \mu \geq \impr_{W^*,W'}^T \bfx - \frac{|W'|}{k} \nonumber
\end{align}

\LinQIP{} incorporates elements from both \MILP{} and \LinDLP. The constraints from \LinDLP{} (along with the additional constraint forcing its objective to be at most 1) ensures that \LinQIP{} is restricted to profiles for which $W^*$ is not Lindahl priceable. If $\LinQIP<0$, then the program has found a profile such that $W^*$ is additionally in the core: for all deviations $W'$, we have $\impr_{W^*,W'}^T \bfx - \frac{|W'|}{k} < 0$. If $\LinQIP \geq 0$, on the other hand, this implies no such profile exists, \emph{i.e.}, every core-stable committee is pricable for these values of $m$ and $k$. Replacing the $\frac{|W'|}{k}$ in the last constraint of \LinQIP{} with $\frac{|W'|}{k+1}$ and checking if $\LinQIP \leq 0$ (rather than $\LinQIP<0$), on other hand, is equivalent to checking if there is a profile such that $W^*$ is Droop core-stable but not Lindahl priceable. Using the final assignments for $\bft$ and $\bfg$ (which are the variables from \LinDLP), we can construct human-readable proofs for \Cref{thm:no-imply} below, as these variables give us exactly the coefficents with which we need to multiply the associated constraints in \LinLP{} in order to show $W^*$ is not Lindahl priceable.

\subsection{Proof of \Cref{thm:no-imply}}
\noimply*
\begin{proof}
    We first prove the claim for weak priceability. Fix $m=5,k=3$. Consider the the vote distribution $\bfx$ defined as
    \begin{align*}
        \bfx[\vote]= \begin{cases}
            1/4 &\text{if }\vote=\{c_2,c_4\}\\
            1/4 &\text{if }\vote=\{c_2,c_5\}\\
            1/4 &\text{if }\vote=\{c_4,c_5\}\\
            1/4 &\text{if }\vote=\{c_1,c_2,c_5\}\\
            0 & \text{otherwise}
        \end{cases}
    \end{align*}
    The committee $W=\{c_1,c_2,c_3\}$ is Droop core-stable. For the sake of contradiction, assume $W$ is also weakly priceable, certified by a price system $\bfp =$ $\{\bfp[\vote,c]\}_{\vote \subseteq \alt, c \in \alt}$. Condition 1 of \Cref{lemma:pbility-linear} then implies
\begin{align*}
     \frac{\bfp[\{c_2,c_4\}, c_2]}{4} + \frac{\bfp[\{c_2,c_5\}, c_2]}{4} &\leq  \sum_{\vote \subseteq \alt} \bfx[\vote]  \bfp[\vote,c_2] \leq \frac{1}{3}\\
     \frac{\bfp[\{c_2,c_4\}, c_4]}{4} + \frac{\bfp[\{c_4,c_5\}, c_4]}{4} &\leq  \sum_{\vote \subseteq \alt} \bfx[\vote]  \bfp[\vote,c_4] \leq \frac{1}{3}\\
     \frac{\bfp[\{c_2,c_5\}, c_5]}{4} + \frac{\bfp[\{c_4,c_5\}, c_5]}{4} &\leq  \sum_{\vote \subseteq \alt} \bfx[\vote]  \bfp[\vote,c_5] \leq \frac{1}{3}
\end{align*}
Similarly, condition 2 of \Cref{lemma:pbility-linear} yields
\begin{align}
    &\bfp[\{c_4,c_5\}, c_4] >1 \quad \text{and} \quad \bfp[\{c_4,c_5\}, c_5]  >1 \label{eq:no-a1}\\
    &\bfp[\{c_2,c_4\}, c_2] + \bfp[\{c_2, c_4\}, c_4] >1 \label{eq:no-a2}\\
    &\bfp[\{c_2,c_5\}, c_2] + \bfp[\{c_2, c_5\}, c_5]  >1 \label{eq:no-a3}
\end{align}
If we multiply \eqref{eq:no-a1}-\eqref{eq:no-a3} by $-1/4$ and add them up together with the previous three inequalities from Condition 1, we get the contradiction $0<0$. 

We now prove the theorem statement for Lindahl priceability. Fix $m=4,k=2$. Consider the the vote distribution $\bfx$ defined as
    \begin{align*}
        \bfx[\vote]= \begin{cases}
            1/3 &\text{if }
            \vote=\{c_3,c_4\}\\
            1/3 &\text{if }\vote=\{c_1,c_3,c_4\}\\
            1/3 &\text{if }\vote=\{c_1,c_2,c_3,c_4\}\\
            0 & \text{otherwise}
        \end{cases}
    \end{align*}
    The committee $W=\{c_1,c_2\}$ is Droop core-stable. For the sake of contradiction assume $W$ is also Lindahl priceable, certified by a price system $\bfp =$ $\{\bfp[\vote,c]\}_{\vote \subseteq \alt, c \in \alt}$. Condition 1 of \Cref{lemma:pbility-linear} then implies
\begin{align*}
   \frac{\bfp[\{c_1,c_3,c_4\}, c_1]}{3}+ \frac{\bfp[\{c_1,c_2,c_3,c_4\}, c_1]}{3}  &\leq  \sum_{\vote \subseteq \alt} \bfx[\vote]  \bfp[\vote,c_1] \leq \frac{1}{2}\\
   \frac{\bfp[\{c_3,c_4\}, c_3]}{3}+  \frac{\bfp[\{c_1,c_3,c_4\}, c_3]}{3}+ \frac{\bfp[\{c_1,c_2,c_3,c_4\}, c_3]}{3}   &\leq  \sum_{\vote \subseteq \alt} \bfx[\vote]  \bfp[\vote,c_3] \leq \frac{1}{2}\\
   \frac{\bfp[\{c_3,c_4\}, c_4]}{3}+  \frac{\bfp[\{c_1,c_3,c_4\}, c_4]}{3}+ \frac{\bfp[\{c_1,c_2,c_3,c_4\}, c_4]}{3}   &\leq  \sum_{\vote \subseteq \alt} \bfx[\vote]  \bfp[\vote,c_4] \leq \frac{1}{2}.
\end{align*}
Similarly, condition 2 of \Cref{lemma:pbility-linear} yields
\begin{align}
    &\bfp[\{c_3,c_4\}, c_3] >1 \quad \text{and} \quad \bfp[\{c_3,c_4\}, c_4]  >1 \label{eq:no-b0}\\
    &\bfp[\{c_1,c_3,c_4\}, c_1] + \bfp[\{c_1,c_3,c_4\}, c_3] >1 \label{eq:no-b1}\\
    &\bfp[\{c_1,c_3,c_4\}, c_1] + \bfp[\{c_1,c_3,c_4\}, c_4]  >1 \label{eq:no-b2}\\
    &\bfp[\{c_1,c_3,c_4\}, c_3] + \bfp[\{c_1,c_3,c_4\}, c_4] >1 \label{eq:no-b3}\\
    &\bfp[\{c_1,c_2,c_3,c_4\}, c_1] +\bfp[\{c_1,c_2,c_3,c_4\}, c_3] +\bfp[\{c_1,c_2,c_3,c_4\}, c_4] >1 \label{eq:no-b4}.
\end{align}
If we multiply \eqref{eq:no-b1}-\eqref{eq:no-b3} by $-1/6$, multiply \eqref{eq:no-b0} and \eqref{eq:no-b4} by $-1/3 $, and add them up together with the previous three inequalities from Condition 1, we get the contradiction $0<0$.

\end{proof}

\subsection{Proof of \Cref{cor:linhdahl-no-imply}}\corepnotlp*

\begin{proof}
    We revisit a counterexample $\bfx$ from the proof of \Cref{thm:no-imply}. For $m=4$ and $k=2$, we had shown that $W=\{c_1,c_2\}$ is core-stable but not Lindahl priceable for the vote distribution $\bfx$ defined as
\begin{align*}
        \bfx[\vote]= \begin{cases}
            1/3 &\text{if }
            \vote=\{c_3,c_4\}\\
            1/3 &\text{if }\vote=\{c_1,c_3,c_4\}\\
            1/3 &\text{if }\vote=\{c_1,c_2,c_3,c_4\}\\
            0 & \text{otherwise}
        \end{cases}
    \end{align*}
    
    It remains to show that $W$ is weakly priceable. We can show this using the following price system $\{\bfp[\vote,c]\}_{\vote \subseteq \alt, c \in \alt}$, which satisfies the conditions of \Cref{lemma:pbility-linear} (for weak priceability).
    \begin{enumerate}
        \item[1.] $\bfp[\{c_3,c_4\},c_3] = \bfp[\{c_3,c_4\},c_4]= \bfp[\{c_1,c_3,c_4\},c_1]=\bfp[\{c_1,c_2,c_3,c_4\},c_2]=\frac{3}{2}$,
        \item[2.] $\bfp[\vote,c] =0$ for all \emph{other} vote-candidate pairs such that $\bfx[\vote]>0$,
        \item[3.] $\bfp[\vote,c] = 1.1$ for all vote-candidate pairs such that $\bfx[\vote]=0$.
    \end{enumerate}
\end{proof}

\end{document}